\documentclass[conference, 12pt, onecolumn]{ieeeconf}

\IEEEoverridecommandlockouts
\overrideIEEEmargins

\usepackage[usenames]{color}
\usepackage{enumerate}
\usepackage{url}
\usepackage{subfigure}
\usepackage{amsfonts,mathrsfs}
\usepackage{amssymb,amsmath}
\usepackage{verbatim}
\usepackage{acronym}
\usepackage{mathtools}
\usepackage{cite}
\usepackage{graphicx}

%\newcommand{\remove}[1]{}
%\def\jnt#1{{{\bf #1}}}

%\long\def\red#1{{{\color{red}#1}}}

\def\fskip#1{}

\newtheorem{theorem}{Theorem}

\newtheorem{definition}{Definition}

\newtheorem{lemma}{Lemma}

\newtheorem{proposition}[theorem]{Proposition}
\newtheorem{remark}{Remark}

\def\1{{\bf 1}}

\newcommand{\remove}[1]{}

\begin{document}
\title{A Simple Framework for Stability Analysis of State-Dependent Networks of Heterogeneous Agents}
\author{\authorblockN{S. Rasoul Etesami}
  \authorblockA{Department of Industrial and Enterprise Systems Engineering\\ University of Illinois at Urbana-Champaign,  Urbana, IL 61801\\
     Email: etesami1@illinois.edu}
%\thanks{*Research supported in part by the ``Cognitive \& Algorithmic Decision Making" project grant through 
%the College of Engineering of the University of Illinois, and in part 
%by AFOSR MURI Grant FA 9550-10-1-0573 and NSF grant CCF 11-11342.}
}
\maketitle
\begin{abstract}
Stability and analysis of multi-agent network systems with state-dependent switching typologies have been a fundamental and longstanding challenge in control, social sciences, and many other related fields. These already complex systems become further complicated once one accounts for asymmetry or heterogeneity of the underlying agents/dynamics. Despite extensive progress in analysis of conventional networked decision systems where the network evolution and state dynamics are driven by independent or weakly coupled processes, most of the existing results fail to address multi-agent systems where the network and state dynamics are highly coupled and evolve based on status of heterogeneous agents. Motivated by numerous applications of such dynamics in social sciences, in this paper we provide a new direction toward analysis of dynamic networks of heterogeneous agents under complex time-varying environments. As a result we show how Lyapunov stability and convergence of several challenging problems from opinion dynamics can be established using a simple application of our framework. Moreover, we introduce a new class of asymmetric opinion dynamics, namely nearest neighbor dynamics, and show how our approach can be used to analyze their behavior. In particular, we extend our results to game-theoretic settings and provide new insights toward analysis of complex networked multi-agent systems using exciting field of sequential optimization.
\end{abstract}
\begin{keywords}
Lyapunov stability; multi-agent decision systems; state-dependent dynamics; switching network dynamics; opinion dynamics, block coordinate descent, game theory. 
\end{keywords}
%%%%%%%%%%%%%%%%%%%%%%%

\section{Introduction}
Researchers in a number of fields are currently finding a variety of applications for complex networks, and distributed multi-agent network systems are currently the focal point of many new applications. Such applications relate to the growing popularity of online social networks, the analysis of large network data sets, the problems that arise from interactions among agents in complex networks such as formation control, smart grids, political, economic, and biological systems, and the expansion of power and wireless networks in our daily life. 

There is ample evidence that decision making is often guided by heterogeneous agents interacting in a complex time-varying environment. Perhaps one simple example is when a set of heterogeneous robots with different communication capabilities want to rendezvous despite the fact that they are simultaneously moving and yet have to maintain communication connectivity. However, it is often observed that in practice the behavior of multi-agent decision systems under static symmetric/homogeneous setting is fundamentally different from its dynamic asymmetric/heterogeneous counterpart. Unlike the static homogeneous case, often any comprehensive analysis of dynamic heterogeneous multi-agent systems is quite challenging, particularly in dynamic environments, and this class of problems has eluded researchers for many years. New ideas and methodologies need to be developed to address such shortcomings in which any progress can impact numerous applications in variety of domains including opinion formation in social networks, formation control, cyber-physical security, dynamic clustering, among many others. Since better understanding of such complex systems will allow us to design novel or perhaps fundamentally different mechanisms, in this work we take some initial steps toward extending the existing results on multi-agent network systems from the static homogeneous setting to highly dynamic and heterogeneous environments. 

\subsection{Motivation}

There are many motivating examples of relationships in political, social, and engineering applications, which are governed by complex networks of heterogeneous agents. Agents may possibly belong to multiple groups and be connected by multi-layered networks. The networks can also be dynamic in the sense that they can vary over time depending on the agents' status. As a few illustrative examples one can consider (Figure \ref{Fig:min:realization}):

\smallskip
\begin{itemize}
  \item In social networks, there are often clear affinities among people based on shared political or cultural beliefs. However, on specific issues, alliances form among people from different groups. Almost every congressional vote provides an example of this phenomenon, where some representatives break away from their respective parties to vote with the other party. \smallskip
  \item In formation control a basic task is to design a distributed protocol so that a set of robots collectively form a certain structure. Robots have different communication capabilities and can only communicate with those in their local neighborhood. Consequently, the communication network between them may vary depending on their relative distances from each other. \smallskip 
  \item Relationships between countries in the Middle East, and their ties to the US and Russia are nuanced, with affiliations changing over time, and depending on context. Similarly, relationships among terrorist organizations, such as ISIS, Al Qaeda, Taliban, and LeT, often change due to their battle for supremacy, and their fight for the allegiance of their extremist followers. \smallskip
  \item In opinion systems such as political election polls, individuals initially have different opinions about a certain topic/candidate. Individuals frequently interact and become friend/unfriend depending on how close their opinions are from each other. In particular, through such interactions their opinions gradually form and a collective opinion eventually emerges.
\end{itemize}   

\begin{figure}
\vspace{2.75cm}
  \begin{center}
    \includegraphics[totalheight=.16\textheight,
width=.25\textwidth,viewport=875 0 1475 600]{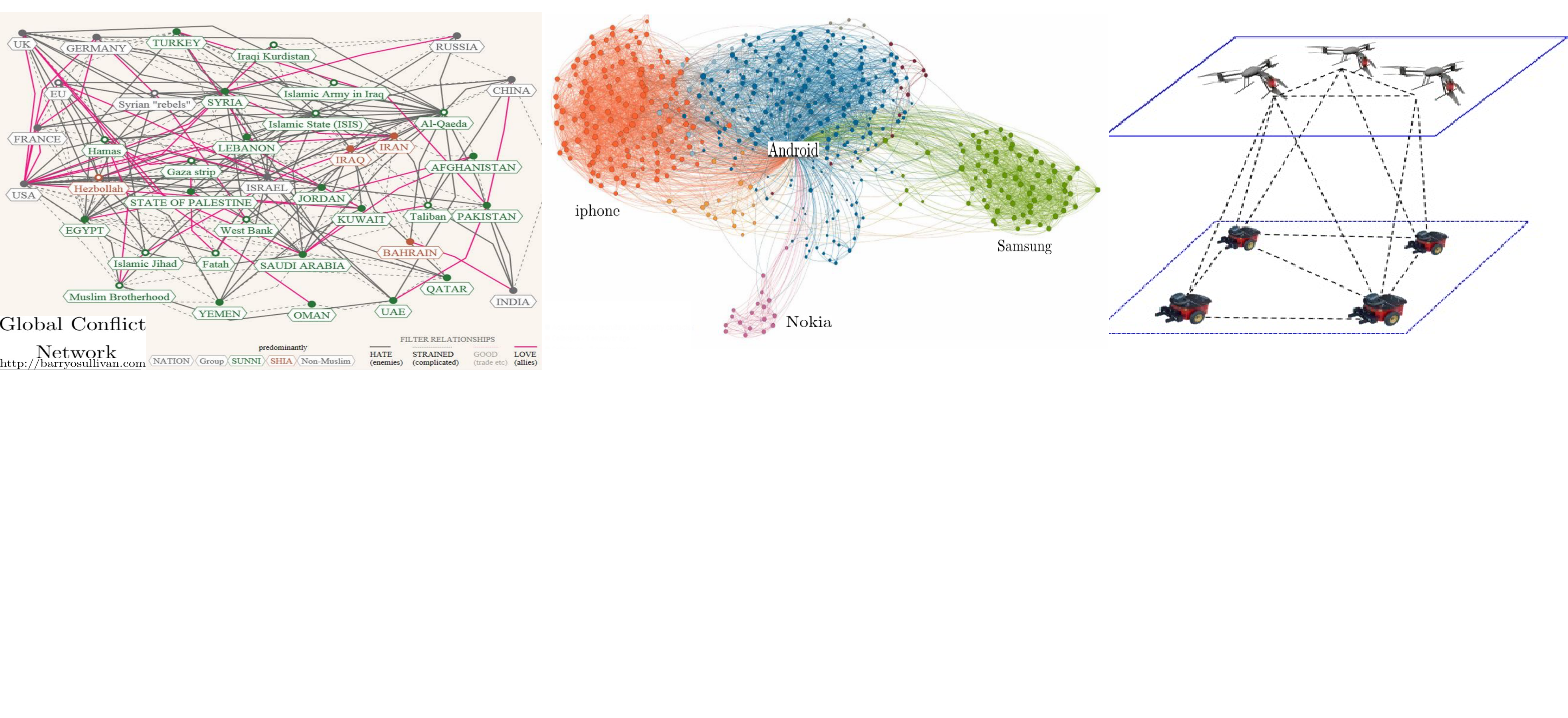}
  \end{center}\vspace{-3cm}
  \caption{\footnotesize{The left figure depicts a complex network of global conflicts between countries (agents) due to their political or religious ties. The middle figure represents a social network of individuals' opinions for buying different cell-phone products. The right figure shows a specific network formation taken by UAVs and drones. In all the figures the networks may dynamically change over time and the agents can be heterogeneous (e.g., drones and UAVs have different communication capabilities).\vspace{-0.2cm}}}\label{Fig:min:realization}
\end{figure} 

\smallskip
Motivated by the above, and many other similar examples, our objective in this work is to provide a simple framework to understand and analyze the behavior of networks of heterogeneous agents with a rich dynamic network structure which may evolve or vary based on agents' states. To this end, we provide new connections between analysis of multi-agent network systems and some developed methods in the mature fields of successive optimization and game theory. Utilizing such connections, we establish Lyapunov stability and convergence of several classes of heterogeneous multi-agent network systems with switching state-dependent dynamics.

\subsection{Literature Review and Organization}
There has been a rich body of literature on analysis of distributed multi-agent network systems, mainly from the static point of view, in which a set of agents iteratively interact over a \emph{fixed} communication network so as to achieve a certain goal such as consensus or optimizing a global objective function. The classical models of Degroot \cite{degroot1974reaching} and Friedkin\& Johnsen \cite{friedkin1997social} in social science are two special types of such systems. As the literature on this area is quite vast, we refer interested readers for an overview to \cite{nedic2018network} and \cite{AVP-RT:17}. However, a crucial assumption in almost all of these works is that the communication network among the agents is fixed. Often these results can be generalized to time-varying networks by assuming a certain ``independency" between the network process and the state dynamics. For instance, one of the commonly used assumptions is that the network dynamics are governed by an exogenous process which is uncoupled from the state dynamics \cite{olshevsky2009convergence,nedic2009distributed,nedic2015distributed,bacsar2016convergence,zhu2011convergence,tatarenko2017non}. A generalization of this idea is to consider multi-agent network dynamics where the network and state dynamics are allowed to be coupled, however a certain condition such as network connectivity or communication symmetry must be satisfied at each time instance \cite{olfati2004consensus,hendrickx2013convergence,jadbabaie2003coordination,touri2012product}. A further extension of these results is the line of works in \cite{sonin2008decomposition,touri2011existence} which shows that any sequence of stochastic matrices viewed as update matrices of multi-agent network dynamics admits a ``Lyapunov" type function. Unfortunately, such Lyapunov functions are based on another so-called \emph{adjoint} dynamics which depend on the future instances of the original dynamics. Thus, unless there is a strong inherent property in the underlying stochastic matrices \cite{etesami2013termination,touri2012product}, it seems unlikely to leverage the co-evolution of original and adjoint dynamics so as to establish meaningful convergence results.

While the above techniques can properly address a large class of multi-agent network systems, there are still many examples which do not fit into any of the aforementioned frameworks (or the application of the above techniques provides very poor results on the behavior of the multi-agent system). One of the main reasons is that most of the developed frameworks for analysis of multi-agent network systems aim to isolate the evolution of \emph{network} dynamics from those of \emph{state} dynamics in which case one derives a conclusion about local time instances (e.g. one time slot or a window of time instances) and then generalizes this behavior to the entire trajectories. But the main question here is that what if the network and state dynamics are completely determined endogenously so that no local property can be assumed or checked a priori? In other words, if we do not take into account the actual correlation between network and state dynamics (or how they evolve in terms of each other), it seems hopeless to have a good understanding of the overall behavior of the dynamics. This shortcoming is even more pronounced once we take into account agents heterogeneity or asymmetric communication among them. In this work we show that despite these challenges it is still possible to capture the co-evolution of network-state dynamics for several (and perhaps many other) multi-agent network dynamics even under asymmetric or heterogeneous environment. While the evolution of state dynamics are commonly captured by difference/differential equations, one of the key novelties of our work is to explicitly derive analogous equations for \emph{network} dynamics by explicitly introducing \emph{network variables} and couple them with the state variables. This allows us to derive network topology from state dynamics and vice versa and even incorporate other constraints into the system behavior.

The rest of the paper is organized as follows: In Section \ref{sec:framework} we provide a general framework which we use frequently throughout the paper. In Section \ref{sec:HK} we apply this framework to a well-known state-dependent switching dynamics from social sciences, known as Hegselmann-Krause (HK) opinion dynamics, and extend it to an edge-heterogeneous setting with additional communication constraints. In particular, we show how this approach can handle up to some extent asymmetric communication among the agents in the HK model. In section \ref{sec:nearest-neighbor} we introduce a new class of asymmetric opinion dynamics, namely nearest neighbor dynamics, and show that our framework can also be applied suitably to these models despite the fact that the underlying communication networks suffer from asymmetry. In Section \ref{sec:game}, we provide an application of our framework to game-theoretic settings and show how the existing results in game theory can be leveraged to analyze heterogeneous multi-agent network dynamics. We conclude the paper by identifying some future directions of research in Section \ref{sec:conclusion}. 

\subsection*{Notations}
We adopt the following notations throughput the paper: We use bold symbols for vectors. Given a vector $\boldsymbol{v}$ we let $diag(\boldsymbol{v})$ be a diagonal matrix with vector $\boldsymbol{v}$ as its diagonal elements and zero, everywhere else. Moreover, we denote the transpose of $\boldsymbol{v}$ by $\boldsymbol{v}^T$. For a positive integer $n\in\mathbb{Z}^+$ we set $[n]:=\{1,2,\ldots,n\}$. We let $\boldsymbol{1}$ be a column vector of all ones. Given a positive-definite matrix $Q$ and a vector $\boldsymbol{v}\in\mathbb{R}^n$, we let $\|\boldsymbol{v}\|^2_Q=\boldsymbol{v}^TQ\boldsymbol{v}$, and $\|\boldsymbol{v}\|$ to be the Euclidean norm of $\boldsymbol{v}$. Given a real number $x\in\mathbb{R}$ we set $(x)^-:=\min\{0,x\}$. We denote the cardinality of a finite set $S$ by $|S|$. 

\section{A Sequential Optimization Framework}\label{sec:framework} 
Often multi-agent dynamical systems which commonly arise in control or social sciences are images of optimization algorithms which are commonly used in machine learning literature. To make this connection more clear, let us first consider the following iterated block successive minimization process which is frequently used in the machine learning literature for minimizing a smooth/non-smooth function \cite{razaviyayn2013unified}. Consider the optimization problem: 
\begin{align}\nonumber
\min f(\boldsymbol{y}_1,\boldsymbol{y}_2,\ldots,\boldsymbol{y}_n), \ \boldsymbol{y}_i\in Y_i, \forall i,
\end{align}
where $Y_i\subseteq \mathbb{R}^{m_i}$ is a closed convex set, and $f: \prod_{i=1}^{n}Y_i\to \mathbb{R}$ is a continuous function. A popular approach to solve the above optimization problem is the \emph{block coordinate descent} (BCD) method. At each iteration of this method, the objective function is minimized with respect to a single block of variables while the rest of the blocks are held \emph{fixed}. More specifically, at iteration $t=0,1,\ldots$ of the algorithm, the block variable $\boldsymbol{y}_i$ is updated by solving the following subproblem:         
\begin{align}\nonumber
\boldsymbol{y}_i^t=\arg\min_{\boldsymbol{z}_i\in Y_i} f(\boldsymbol{y}^{t}_1,\ldots,\boldsymbol{y}_{i-1}^{t},\boldsymbol{z}_i,\boldsymbol{y}_{i+1}^{t},\ldots,\boldsymbol{y}^t_n), \ i\in[n].
\end{align}
In particular, an important question here is whethere or not the generated sequence $\{\boldsymbol{y}^t\}_{t=0}^{\infty}$ where $\boldsymbol{y}^t=(\boldsymbol{y}^t_1,\ldots,\boldsymbol{y}^t_n)$ will converge to a local/global minimizer of the objective function $f(\cdot)$.  Due to its particular simple implementation, the BCD method has been widely used for solving problems such as power allocation in image denoising and image reconstruction, wireless communication systems, clustering, and dynamic programming \cite{howson1975new,hartigan1979algorithm,razaviyayn2013unified}. On the other hand, since in practice finding the exact minimum in each iteration with respect to a block variable might be expensive, one can consider different variants of the BCD method, such as \emph{inexact} BCD method, where one adds a smooth regularizer to the objective function or approximates it by a smooth upper bound function. In either case, and under some mild assumptions, it can be shown that the BCD method will converge to a stationary point of the objective function $f(\cdot)$ \cite{razaviyayn2013unified}. 

To see how BCD method can be used toward stability analysis of multi-agent network systems, let us consider a special case of the above minimization where there are only two block variables, namely a \emph{state} block variable $\boldsymbol{y}:=(\boldsymbol{y}_1,\ldots,\boldsymbol{y}_n)\in \mathbb{R}^{n\times d}$, where $\boldsymbol{y}_i\in\mathbb{R}^{d}$ denotes the state of agent $i$, and a \emph{network} block variable $\boldsymbol{\lambda}:=(\lambda_{ij})\in \Lambda\subseteq [0,1]^{n\times n}$, where $\lambda_{ij}=1$ if there is a directed edge from agent (node) $i$ to agent $j$ so that $\boldsymbol{y}_i$ can be influenced by $\boldsymbol{y}_j$, and $\lambda_{ij}=0$ if no such an edge exists. In other words, an integral block variable $\boldsymbol{\lambda}$ encodes the adjacency matrix of the communication network among agents. Note that we also allow the network to contain self-loops whenever $\lambda_{ii}=1$ for some $i$.

Now let us consider a class of multi-agent network dynamics with $n$ agents evolving over discrete time instances $t=0,1,2,\ldots$ as:
\begin{align}\label{eq:coupled-dynamics}
&\boldsymbol{x}(t+1)= g_1(\boldsymbol{x}(t),\boldsymbol{\lambda}(t)),\cr 
&\boldsymbol{\lambda}(t+1)= g_2(\boldsymbol{x}(t+1)),
\end{align}
where $\boldsymbol{x}(t):=(x_1(t),\ldots,x_n(t))\in X\subseteq \mathbb{R}^{n\times d}$ denotes agent $i$'s state at time $t$, and $\boldsymbol{\lambda}(t)=(\lambda_{ij}(t))_{ij}\in \Lambda\subseteq [0,1]^{n\times n}$ identifies the communication links between each pair of agents at that time. Here $g_1(\cdot)$ and $g_2(\cdot)$ are two functions which capture the update rule of the underlying dynamics. As it can be seen from \eqref{eq:coupled-dynamics}, the state of the dynamics at the next time instance $\boldsymbol{x}(t+1)$ is determined by the joint pair of state-network at time $t$, i.e., $(\boldsymbol{x}(t),\boldsymbol{\lambda}(t))$, while the network structure at time $t+1$ is determined by the state variable at time $t$.

\begin{proposition}\label{prop:framework}
Let $(S,\leq_S)$ be a totally ordered set.\footnote{In most applications of this paper (with an exception of Section \ref{sec:nearest-neighbor}) we set $(S,\leq_S)$ to be the set of real numbers endowed by its natural order.} Moreover, assume that there exists a function $\Phi(\boldsymbol{y},\boldsymbol{\lambda}):X\times \Lambda\to S$, such that given any fixed network $\boldsymbol{\lambda}^*\in \Lambda$,  $\Phi(\boldsymbol{y},\boldsymbol{\lambda}^*):X\to S$ is nonincreasing (with respect to $\leq_S$) along the image of $g_1(\cdot)$, i.e.,
\begin{align}\label{eq:state}
\Phi\big(g_1(\boldsymbol{y},\boldsymbol{\lambda}^*),\boldsymbol{\lambda}^*\big)\leq_S \Phi\big(\boldsymbol{y},\boldsymbol{\lambda}^*\big), \ \forall \boldsymbol{y}\in X.  
\end{align}
If there exists a function $f_1:\Lambda\to S$ such that for any fixed state $\boldsymbol{y}\in X$,
\begin{align}\label{eq:network}
g_2(\boldsymbol{y})\in \arg\min_{\boldsymbol{\lambda}\in \Lambda} \{f(\boldsymbol{y},\boldsymbol{\lambda}):=\Phi(\boldsymbol{y},\boldsymbol{\lambda})+f_1(\boldsymbol{\lambda})\},
\end{align}
then $f(\boldsymbol{y},\boldsymbol{\lambda})$ is nonincreasing (with respect to $\leq_S$) along the trajectories of \eqref{eq:coupled-dynamics}.
\end{proposition}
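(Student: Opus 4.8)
The plan is to show that $f$ decreases over a single time step by decomposing the transition $(\boldsymbol{x}(t),\boldsymbol{\lambda}(t)) \mapsto (\boldsymbol{x}(t+1),\boldsymbol{\lambda}(t+1))$ into the two half-steps that mirror the block coordinate descent structure: first the \emph{state} update (which changes $\boldsymbol{x}$ while holding the network block fixed), then the \emph{network} update (which changes $\boldsymbol{\lambda}$ while holding the state block fixed). The goal reduces to establishing the chain
\begin{align}\nonumber
f(\boldsymbol{x}(t+1),\boldsymbol{\lambda}(t+1)) \leq_S f(\boldsymbol{x}(t+1),\boldsymbol{\lambda}(t)) \leq_S f(\boldsymbol{x}(t),\boldsymbol{\lambda}(t)),
\end{align}
for every $t$, which is precisely the claim that $f$ is nonincreasing along \eqref{eq:coupled-dynamics}.

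For the second inequality (the state half-step), I would invoke the hypothesis \eqref{eq:state} with the particular choices $\boldsymbol{\lambda}^*=\boldsymbol{\lambda}(t)$ and $\boldsymbol{y}=\boldsymbol{x}(t)$. Since $\boldsymbol{x}(t+1)=g_1(\boldsymbol{x}(t),\boldsymbol{\lambda}(t))$, this immediately yields $\Phi(\boldsymbol{x}(t+1),\boldsymbol{\lambda}(t))\leq_S \Phi(\boldsymbol{x}(t),\boldsymbol{\lambda}(t))$. Because the network block is unchanged across this half-step, the term $f_1(\boldsymbol{\lambda}(t))$ is common to both sides, so adding it recovers $f(\boldsymbol{x}(t+1),\boldsymbol{\lambda}(t))\leq_S f(\boldsymbol{x}(t),\boldsymbol{\lambda}(t))$.

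For the first inequality (the network half-step), I would use the optimality characterization \eqref{eq:network} with the fixed state $\boldsymbol{y}=\boldsymbol{x}(t+1)$. By definition of the dynamics, $\boldsymbol{\lambda}(t+1)=g_2(\boldsymbol{x}(t+1))$ is a minimizer of $f(\boldsymbol{x}(t+1),\cdot)$ over $\Lambda$. Since the previous network $\boldsymbol{\lambda}(t)\in\Lambda$ is a feasible competitor, optimality gives $f(\boldsymbol{x}(t+1),\boldsymbol{\lambda}(t+1))\leq_S f(\boldsymbol{x}(t+1),\boldsymbol{\lambda}(t))$, which closes the chain; induction on $t$ then finishes the argument.

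The only genuinely delicate point is the monotonicity of the addition used when passing from $\Phi$ to $f=\Phi+f_1$ in the state half-step: one must know that adding the common element $f_1(\boldsymbol{\lambda}(t))$ to both sides of an inequality preserves $\leq_S$. When $S=\mathbb{R}$ (the setting of the footnote and of most of the paper) this is immediate; for the general totally ordered set $S$ one needs the order $\leq_S$ to be translation-invariant under the additive operation implicit in \eqref{eq:network}, i.e., $S$ should carry the structure of an ordered monoid so that $a\leq_S b$ implies $a+c\leq_S b+c$. I expect this compatibility to be the main (and essentially only) obstacle, and it is exactly the property one must check when instantiating the framework with the lexicographic-type order used in Section \ref{sec:nearest-neighbor}.
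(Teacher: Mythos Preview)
Your proposal is correct and follows essentially the same two-step chain as the paper's proof: first use \eqref{eq:network} to get $f(\boldsymbol{x}(t+1),\boldsymbol{\lambda}(t+1))\leq_S f(\boldsymbol{x}(t+1),\boldsymbol{\lambda}(t))$, then use \eqref{eq:state} plus the common additive term $f_1(\boldsymbol{\lambda}(t))$ to get $f(\boldsymbol{x}(t+1),\boldsymbol{\lambda}(t))\leq_S f(\boldsymbol{x}(t),\boldsymbol{\lambda}(t))$. Your observation about needing translation-invariance of $\leq_S$ under the implicit addition is a genuine subtlety that the paper's proof uses without comment.
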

\begin{proof}
Using the definition of joint dynamics \eqref{eq:coupled-dynamics} we can write:
\begin{align}\nonumber
f(\boldsymbol{x}(t+1),\boldsymbol{\lambda}(t+1))&=f(\boldsymbol{x}(t+1),g_2(\boldsymbol{x}(t+1)))\cr 
&=\min_{\boldsymbol{\lambda}\in \Lambda} f(\boldsymbol{x}(t+1),\boldsymbol{\lambda})\cr 
&\leq_S f(\boldsymbol{x}(t+1),\boldsymbol{\lambda}(t))\cr 
&=\Phi\big(g_1(\boldsymbol{x}(t),\boldsymbol{\lambda}(t)),\boldsymbol{\lambda}(t)\big)+f_1(\boldsymbol{\lambda}(t))\cr 
&\leq_S\Phi\big(\boldsymbol{x}(t),\boldsymbol{\lambda}(t)\big)+f_1(\boldsymbol{\lambda}(t))\cr 
&=f(\boldsymbol{x}(t),\boldsymbol{\lambda}(t)),
\end{align}   
where the second equality is due to \eqref{eq:network}, and the last inequality holds by \eqref{eq:state} given the fixed $\boldsymbol{\lambda}^*=\boldsymbol{\lambda}(t)$. As a result, the coupled dynamics in \eqref{eq:coupled-dynamics} can be replicated by applying the BCD method to the objective function $f(\boldsymbol{y},\boldsymbol{\lambda})=\Phi(\boldsymbol{y},\boldsymbol{\lambda})+f_1(\boldsymbol{\lambda})$ with constraint sets $\boldsymbol{\lambda}\in \Lambda$ and $\boldsymbol{x}\in X$, where at each iteration we fix either network or state variable and optimize the objective function $f$ with respect to the other variable.
\end{proof}

\begin{definition}
Let $(S,\leq_S)$ be a totally ordered set. A function $V:\mathbb{R}^{m}\to S$ is called a Lyapunov function for the discrete time dynamical system $\boldsymbol{z}(t+1)=h(\boldsymbol{z}(t))$, if it is nonincreasing along the trajectories of the dynamics, i.e., $V(\boldsymbol{z}(t+1))\leq_S V(\boldsymbol{z}(t))$. We refer to a dynamical system which admits a Lyapunov function as Lyapunov stable.
\end{definition}

Intuitively, Proposition \ref{prop:framework} states that if trajectories of the projected state dynamics in \eqref{eq:coupled-dynamics} over a \emph{fixed} network $\boldsymbol{\lambda}^*$ admit a Lyapunov function $\Phi(\boldsymbol{y},\boldsymbol{\lambda}^*)$, while minimizing $f(\boldsymbol{y},\boldsymbol{\lambda})=\Phi(\boldsymbol{y},\boldsymbol{\lambda})+f_1(\boldsymbol{\lambda})$ with respect to $\boldsymbol{\lambda}\in \Lambda$ accurately captures the network associated to the fixed state $\boldsymbol{y}$, then the joint state-network dynamics \eqref{eq:coupled-dynamics} admit a Lyapunov function. In particular, $V(\boldsymbol{y})=\min_{\boldsymbol{\lambda}\in \Lambda}f(\boldsymbol{y},\boldsymbol{\lambda})$ serves as a Lyapunov function for the dynamics $\{\boldsymbol{x}(t), t=0,1,\ldots\}$ generated by \eqref{eq:coupled-dynamics}. In what follows next we show how this simple framework can be used to establish Lyapunov stability and convergence of several important state-dependent multi-agent network dynamics.

\section{Hegselmann-Krause Opinion Dynamics}\label{sec:HK}

To show effectiveness of the proposed framework in Section \ref{sec:framework} toward stability and convergence analysis of multi-agent network systems, in this section we consider a well-known model from opinion dynamics known as \emph{Hegselmann-Krause} (HK) model \cite{hegselmann2002opinion}. A natural question that commonly arises in social sciences is the extent to which one can predict the outcome of the opinion formation of entities under some complex interaction process running among these social actors \cite{degroot1974reaching,hegselmann2002opinion,friedkin1999social,friedkin2011social,lorenz2007continuous,proskurnikov2018tutorial}. In this regard, one of the first studies was undertaken by Hegselmann and Krause in \cite{hegselmann2002opinion} with many applications in the robotics rendezvous \cite{Bullo,chazelle2011total}, linguistic formation \cite{dong2016dynamics}, social networks \cite{ye2018evolution}, trust and marketing \cite{wai2015identifying}, among many others \cite{proskurnikov2018tutorial}. In the HK model, a finite number of agents frequently update their opinions where the opinion of each agent is captured by a scalar (or vector) quantity in one (or higher) dimension.\footnote{For simplicity of presentation, in this section we only consider one dimensional HK model. However, all the results can be extended in a straightforward manner to higher dimensions.} Because of the conservative nature of social entities, each agent in this model communicates only with those whose opinions are closer to him and lie within a certain level of his confidence. 
 
In the \emph{homogeneous} HK model, there are a set of $[n]$ agents. It is assumed that at each time instance $t=0, 1, 2, \ldots$, the opinion (state) of agent $i\in[n]$ can be represented by a scalar $x_{i}(t)\in \mathbb{R}$. Each agent $i$ updates its value at time $t$ by taking the arithmetic average 
of its own value and those of all the others that are in its \emph{$\epsilon$-neighborhood} at time $t$. Here the parameter $\epsilon>0$ is a constant which captures the confidence bound. More precisely, the evolution of opinion vectors $\boldsymbol{x}(t):=(x_1(t),\ldots,x_n(t))\in\mathbb{R}^n$ can be modeled by the following discrete-time dynamics:
\begin{align}\label{eqn:DynamicForm}
&\boldsymbol{x}(t+1)=A(t)\boldsymbol{x}(t),\cr
&A_{ij}(t) = \begin{cases} \frac{1}{|N_i(t)|} & \mbox{if } j\in N_i(t), \\
 0 & \mbox{ else}, \end{cases}
\end{align}
where  $N_i(t)$ is the set of neighbors of agent $i$, i.e., 
\begin{align}\nonumber
N_i(t)=\{j\in[n]: |x_i(t)-x_j(t)|\leq \epsilon\}.
\end{align}

In the \emph{node heterogeneous} HK dynamics everything remains as above except that different agents can have different confidence bounds $\epsilon_i,  i\in [n]$. The node heterogeneous model reflects the fact that some agents are very open minded (large $\epsilon_i$) and are willing to communicate with many others before updating their opinions, while some agents are closed-minded (small $\epsilon_i$) and are biased towards their own opinions. For instance $\epsilon_i=0$ means that agent $i$ is stubborn and will not change its opinion at all. Although at first glance the differences between homogeneous and node-heterogeneous HK dynamics may seem negligible, their outcomes are substantially different, such that most of the results from one cannot be carried over to the other \cite{lorenzt,mirtabatabaei2012opinion,lorenz2010heterogeneous,julient}. In this regard, one of the fundamental questions concerning HK dynamics is whether or not they eventually converge to a final outcome.

\subsection{Homogeneous HK Model}
Let us first focus on the \emph{homogeneous} HK model. Although convergence and detailed analysis of the homogeneous HK model have been established and studied extensively in the past literature (see, e.g., \cite{proskurnikov2018tutorial} for a comprehensive survey), in this subsection we provide a simple argument to show why this model fits into our framework. This will allow us to generalize stability of homogeneous HK model to account for higher degrees of heterogeneity or asymmetry among the agents. 

Let us define $\mathcal{G}_t=([n],\mathcal{E}_t)$ to be the communication graph at a generic time $t$ such that there is an edge between agents $i$ and $j$ at time $t$, i.e., $(i,j)\in\mathcal{E}_t$ if and only if $j\in N_i(t)$. Note that in the homogeneous HK model the communication graph is undirected as if agent $j$ is a neighbor of agent $i$, the converse is also true. In fact, what makes the analysis of HK dynamics challenging is the strong coupling between the evolution of the network $\mathcal{G}_t$ and the state $\boldsymbol{x}(t)$. This is because at any time $t$ the state vector $\boldsymbol{x}(t)$ determines the network topology $\mathcal{G}_t$, and this new network determines the state vector at the next time step $\boldsymbol{x}(t+1)$. This puts HK dynamics to the class of complex time-dependent and state-dependent network dynamics \cite{chazelle2017inertial,pineda2013noisy,blondel20072r,hendrickx2013symmetric,bhattacharyya2013convergence,roozbehani2008lyapunov}. In particular, the communication network $\mathcal{G}_t$ may switch many times depending on how the opinion vectors evolve which brings additional complication to the analysis.

Now let us consider the following objective function comprised of two block variables, namely $\boldsymbol{y}\in \mathbb{R}^n$ and $\boldsymbol{\lambda}=(\lambda_{ij})\in [0,1]^{n\times n}$,
\begin{align}\label{eq:HK}
f(\boldsymbol{y},\boldsymbol{\lambda}):=\sum_{i,j}\lambda_{ij}\Big((y_i-y_j)^2-\epsilon^2\Big).
\end{align} 
This function can also be written in a compact form as $f(\boldsymbol{y},\boldsymbol{\lambda})=\boldsymbol{y}^T\mathcal{L}\boldsymbol{y}-tr(\mathcal{L})\epsilon^2$, where $\mathcal{L}:=diag(\boldsymbol{\lambda}\boldsymbol{1})-\boldsymbol{\lambda}$, and $tr(\cdot)$ denotes the trace function. Intuitively, the block variable $\boldsymbol{\lambda}$ is meant to capture the communication network $\mathcal{G}_t$, and the block variable $\boldsymbol{y}$ captures the opinion states. Note that if we restrict $\lambda_{ij}$s to binary variables in $\{0,1\}$, then $\boldsymbol{\lambda}$ simply represents the adjacency matrix of a network of $n$ agents where $\lambda_{ij}=1$ if there is a directed edge $(i,j)$ from node $i$ to node $j$, and $\lambda_{ij}=0$ otherwise. Moreover, for such a binary block variable $\boldsymbol{\lambda}$, the matrix $\mathcal{L}$ is precisely the \emph{Laplacian} matrix of the communication network associated with $\boldsymbol{\lambda}$. Although we still need to assume that $\boldsymbol{\lambda}\in\{0,1\}^{n\times n}$, to avoid complication of handling integral variables, for now we allow $\lambda_{ij}$s to vary continuously in the interval $[0,1]$. As we shall see soon the integrality of network variables will be automatically achieved during iterations of the BCD method.

Now let us consider the BCD method applied to the objective function \eqref{eq:HK} with block variables $\boldsymbol{x}$ and $\boldsymbol{\lambda}$. For a generic time $t$, let us fix the state variable to $\boldsymbol{y}=\boldsymbol{x}(t)$. Minimizing \eqref{eq:HK} with respect to the network variable $\boldsymbol{\lambda}\in\Lambda=[0,1]^{n\times n}$ we obtain,
\begin{align}\label{eq:lamba}
\boldsymbol{\lambda}_t:=\arg\min_{\boldsymbol{\lambda}\in[0,1]^{n^2}} \sum_{i,j}\lambda_{ij}\Big((x_i(t)-x_j(t))^2-\epsilon^2\Big),
\end{align}
where 
\begin{align}\label{eq:homogeneous-HK-lambda}
(\boldsymbol{\lambda}_t)_{ij}=(\boldsymbol{\lambda}_t)_{ji} = \begin{cases} 1 & \mbox{if } |x_i(t)-x_j(t)|\leq \epsilon, \\
 0 & \mbox{else}. \end{cases}
\end{align}
This simply follows because the objective function $f(\boldsymbol{x}(t),\boldsymbol{\lambda})$ is a linear function of the network variable $\boldsymbol{\lambda}$ and achieves its minimum in an extreme point of $[0,1]^{n\times n}$. In particular, the optimal extreme point can be found by an easy inspection as given in \eqref{eq:homogeneous-HK-lambda}. But note that $\boldsymbol{\lambda}_t$ is precisely the adjacency matrix of the communication graph $\mathcal{G}_t$ in the homogeneous HK model (recall that in the homogeneous HK model two agents are each others' neighbors at time $t$ if and only if their distance is at most $\epsilon$). Therefore, fixing the state variable to $\boldsymbol{x}(t)$ and minimizing \eqref{eq:HK} with respect to the network variable exactly delivers the adjacency matrix of the communication network $\mathcal{G}_t$ in the homogeneous HK model at that time. 

Now let us fix the network variable to the minimizer $\boldsymbol{\lambda}_t$ which represents an undirected graph with associated Laplacian matrix $\mathcal{L}_t$. It is well-known that given a fixed undirected graph with Laplacian matrix $\mathcal{L}_t$ and arbitrary values $\{y_i, i\in[n]\}$ at its $n$ verticies, the quadratic function $\Phi(\boldsymbol{y},\boldsymbol{\lambda}_t):=\boldsymbol{y}^T\mathcal{L}_t\boldsymbol{y}$ is nonincreasing if each node updates its value to the average value of its neighbors \cite{zhang2012lyapunov,alex}. In other words, defining $\boldsymbol{y}'=A_t \boldsymbol{y}$ where $A_t:=(diag(\boldsymbol{\lambda}_t\boldsymbol{1}))^{-1}\boldsymbol{\lambda}_t$, we have $(\boldsymbol{y}')^T\mathcal{L}_t\boldsymbol{y}'\leq \boldsymbol{y}\mathcal{L}_t\boldsymbol{y}$. But note that $A_t$ is precisely the update matrix of the homogeneous HK model given in \eqref{eqn:DynamicForm}, which means that for the specific choice of $\boldsymbol{y}=\boldsymbol{x}(t)$ we have $\boldsymbol{y}'=\boldsymbol{x}(t+1)$. Thus the evolution of the homogeneous HK dynamics is governed by the application of BCD method to the objective function \eqref{eq:HK}. Appealing to Proposition \ref{prop:framework}, this shows that $f(\boldsymbol{x}(t+1),\boldsymbol{\lambda}(t+1))\leq f(\boldsymbol{x}(t),\boldsymbol{\lambda}(t))$, implying that homogeneous HK dynamics admit a Lyapunov function.  

\begin{remark}
Adapting the same notation as in Section \ref{sec:framework} we have 
\begin{align}\nonumber
&g_1(\boldsymbol{y},\boldsymbol{\lambda}_t)=A_t\boldsymbol{y}, \ \ \ \ \ \ \ \ \ g_2(\boldsymbol{y})=(\boldsymbol{1}_{\{|y_i-y_j|\leq\epsilon\}})_{ij},\cr 
&\Phi(\boldsymbol{y},\boldsymbol{\lambda}):=\boldsymbol{y}^T\mathcal{L}\boldsymbol{y}, \ \ \ \ \ \ \ f_1(\boldsymbol{\lambda})=-tr(\mathcal{L})\epsilon^2, 
\end{align}
and $f(\boldsymbol{y},\boldsymbol{\lambda})=\boldsymbol{y}^T\mathcal{L}\boldsymbol{y}-tr(\mathcal{L})\epsilon^2$ so that the first term $\boldsymbol{y}^T\mathcal{L}\boldsymbol{y}$ captures the internal coupling between network and opinion states in the HK model.    
\end{remark}

\subsection{Restricted Edge Heterogeneous HK Dynamics}

In this part we show how the framework of Section \ref{sec:framework} can extend the analysis of homogeneous HK model by capturing higher degrees of heterogeneity or constraints. For this purpose, we consider \emph{restricted edge-heterogeneous} HK model which is a variant of the homogeneous HK model with the following two additional changes: 

\smallskip
{\bf 1) Edge Heterogeneity}: Consider the same dynamical system as in the homogeneous HK model \eqref{eqn:DynamicForm}, except that the distance between every pair of nodes is measured based on possibly a different confidence bound. More precisely, let $\{\epsilon_{ij}=\epsilon_{ji}>0, \forall i\neq j\}$ be a set of fixed thresholds (one for each pair of agents) so that agents $i$ and $j$ at time instance $t$ can communicate if and only if their distance at that time is less than $\epsilon_{ij}$, and thus $N_i(t)=\{j\in [n]: |x_i(t)-x_j(t)|\leq \epsilon_{ij}\}$. This captures the heterogeneous relationship among individuals due to family or other social ties. For instance two family members will still continue to communicate even if their opinions are relatively far from each other, while two strangers are more likely to terminate their interactions as soon as their opinions slightly deviate from each other. As before we assume that at each time instance agents update their opinions by taking the arithmetic average of their neighbors' opinions, determined based on the heterogeneous thresholds $\epsilon_{ij}$. Note that the homogeneous HK dynamics is a special case of the edge heterogeneous setting where $\epsilon_{ij}=\epsilon, \forall i\neq j$.

\smallskip
{\bf 2) Communication Restrictions}: Other than closeness in opinion, often there are other important factors which determine whether or not two agents should communicate. For instance, individuals often communicate with those who have closer opinion to them \emph{and} are within small geographic distance from them. One direct way of handling such restriction is to add an extra component to each agent's opinion where this new component encodes the geographic location of that agent. As a result, in this higher dimensional opinion space two agents are each others' neighbors if each component of their opinion vectors (and in particular the geographic component of their opinion vector) are close to each other. This implies that two agents are eligible to communicate if they are close to each other both opinionwise and geographic-wise.\footnote{Although this approach increases the dimension of the opinion space, yet most of the results such as convergence of the dynamics can be extended to this higher dimensional setting \cite{etesami2015game}.} An alternative approach however for imposing new constraints is to consider a predefined underlying network $\mathcal{G}$ which restricts the agents' interactions to only those who are both connected through the edges of $\mathcal{G}$ and have closer opinion to each other. Intuitively, one can imagine running HK dynamics over the graph $\mathcal{G}$. Thus, denoting the communication network of the edge-heterogeneous HK dynamics at time $t$ by $\mathcal{G}(t)$, the actual communication graph at time $t$ is given by the intersection of edges which appear in both $\mathcal{G}$ and $\mathcal{G}(t)$. This second approach is particularly more suitable when there are certain \emph{hard} communication constraints among individuals due to age gaps, gender restrictions, or other social laws.

\begin{theorem}\label{thm:edge-heterogeneous}
The restricted edge-heterogeneous HK model over an undirected graph $\mathcal{G}=([n],\mathcal{E})$ is Lyapunov stable. Moreover, in the absence of edge heterogeneity (i.e., when all pairs have the same confidence bound), the restricted HK model converges to an equilibrium point geometrically fast.
\end{theorem}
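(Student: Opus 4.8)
The plan is to realize the restricted edge-heterogeneous HK model as a block coordinate descent on a single objective and invoke Proposition~\ref{prop:framework}, and then to upgrade the resulting monotonicity to a geometric rate in the homogeneous-threshold case by a finite-switching argument. I would encode the hard constraints imposed by $\mathcal{G}$ directly into the feasible set $\Lambda:=\{\boldsymbol{\lambda}\in[0,1]^{n\times n}:\lambda_{ij}=0 \text{ for } (i,j)\notin\mathcal{E}\}$ and take
\begin{align}\nonumber
f(\boldsymbol{y},\boldsymbol{\lambda}):=\sum_{(i,j)\in\mathcal{E}}\lambda_{ij}\big((y_i-y_j)^2-\epsilon_{ij}^2\big),\qquad \Phi(\boldsymbol{y},\boldsymbol{\lambda}):=\boldsymbol{y}^T\mathcal{L}\boldsymbol{y},\qquad f_1(\boldsymbol{\lambda}):=-\sum_{(i,j)\in\mathcal{E}}\lambda_{ij}\epsilon_{ij}^2,
\end{align}
with $\mathcal{L}=diag(\boldsymbol{\lambda}\boldsymbol{1})-\boldsymbol{\lambda}$ the weighted Laplacian supported on $\mathcal{E}$, so that $f=\Phi+f_1$ exactly as in \eqref{eq:HK}.

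First I would check the network step \eqref{eq:network}. Fixing $\boldsymbol{y}=\boldsymbol{x}(t)$, the map $\boldsymbol{\lambda}\mapsto f(\boldsymbol{x}(t),\boldsymbol{\lambda})$ is linear over the box $\Lambda$, so a minimizer sits at the extreme point $(\boldsymbol{\lambda}_t)_{ij}=\boldsymbol{1}_{\{(i,j)\in\mathcal{E},\,|x_i(t)-x_j(t)|\leq\epsilon_{ij}\}}$; since $\epsilon_{ij}=\epsilon_{ji}$ and $\mathcal{G}$ is undirected this matrix is symmetric, and it is precisely the adjacency matrix of the restricted edge-heterogeneous communication graph at time $t$. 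Next I would check the state step \eqref{eq:state}: for the fixed symmetric $\boldsymbol{\lambda}_t$ with Laplacian $\mathcal{L}_t$, the averaging map $\boldsymbol{y}\mapsto A_t\boldsymbol{y}$ with $A_t=(diag(\boldsymbol{\lambda}_t\boldsymbol{1}))^{-1}\boldsymbol{\lambda}_t$ --- exactly the HK update \eqref{eqn:DynamicForm} with heterogeneous thresholds --- does not increase the Laplacian quadratic form, i.e. $(A_t\boldsymbol{y})^T\mathcal{L}_t(A_t\boldsymbol{y})\leq\boldsymbol{y}^T\mathcal{L}_t\boldsymbol{y}$, the standard consensus estimate on a fixed undirected graph. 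Both hypotheses of Proposition~\ref{prop:framework} then hold, and $V(\boldsymbol{x}):=\min_{\boldsymbol{\lambda}\in\Lambda}f(\boldsymbol{x},\boldsymbol{\lambda})$ is a Lyapunov function, giving the first claim.

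For the second claim I would specialize to $\epsilon_{ij}\equiv\epsilon$. Here $V$ is bounded below, since every surviving edge contributes at least $-\epsilon^2$ and hence $V\geq-|\mathcal{E}|\epsilon^2$; being nonincreasing, $V(\boldsymbol{x}(t))$ converges and the increments $V(\boldsymbol{x}(t))-V(\boldsymbol{x}(t+1))$, which dominate $\Delta_t:=\boldsymbol{x}(t)^T\mathcal{L}_t\boldsymbol{x}(t)-\boldsymbol{x}(t+1)^T\mathcal{L}_t\boldsymbol{x}(t+1)\geq0$, are summable, so $\Delta_t\to0$. Because $\Delta_t=0$ only when $\boldsymbol{x}(t)$ is constant on each connected component of $\mathcal{G}_t$ (equivalently $\boldsymbol{x}(t+1)=\boldsymbol{x}(t)$ and the topology is frozen), the orbit approaches the set of such equilibria. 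I would then argue that the communication graph stabilizes after a finite time $T$: the opinions stay in the convex hull of the initial data and the summable decrements force the differences $|x_i(t)-x_j(t)|$ to settle, so each edge of $\mathcal{G}$ is ultimately either permanently present ($<\epsilon$) or permanently absent ($>\epsilon$), leaving only finitely many realizable topologies. Once the graph is frozen, $\boldsymbol{x}(t+1)=A\boldsymbol{x}(t)$ for a fixed row-stochastic $A$ whose diagonal is strictly positive (the self-loops $i\in N_i$ always hold); restricted to each connected component $A$ is primitive, so by Perron--Frobenius theory $\boldsymbol{x}(t)$ converges to the per-component consensus at the geometric rate set by the second-largest eigenvalue modulus of $A$, which is the claimed equilibrium.

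The main obstacle is the finite-switching step, and the communication restriction sharpens it: intersecting the confidence neighborhoods with the fixed edge set $\mathcal{E}$ destroys the interval structure of the neighborhoods, so the order-preservation property that underlies the classical HK freezing proofs no longer holds, and one cannot rule out threshold toggling by monotonicity alone. The plan is therefore to lean on the summability of $\Delta_t$ together with the discreteness of $\Lambda$ (finitely many extreme points, i.e. finitely many topologies) to show that an edge cannot cross the threshold $|x_i-x_j|=\epsilon$ infinitely often, treating the nongeneric boundary case $|x_i^*-x_j^*|=\epsilon$ separately. Everything downstream of the freeze is routine linear-systems and Perron--Frobenius theory and delivers the geometric rate.
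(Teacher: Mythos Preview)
Your Lyapunov-stability argument is correct and essentially identical to the paper's: same objective $f$, same constraint set $\Lambda$ encoding $\mathcal{E}$, same verification of \eqref{eq:network} and \eqref{eq:state}, same conclusion $V(\boldsymbol{y})=\min_{\boldsymbol{\lambda}\in\Lambda}f(\boldsymbol{y},\boldsymbol{\lambda})$.

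The gap is exactly where you flag it: the finite-switching step. Your sketch (``summable decrements force the differences $|x_i(t)-x_j(t)|$ to settle'') does not work as stated. Summability of $\Delta_t$, or even of $\|\boldsymbol{x}(t+1)-\boldsymbol{x}(t)\|^2$, together with boundedness does \emph{not} imply that $\boldsymbol{x}(t)$ (or any coordinate difference) converges; step sizes of order $1/t$ are square-summable yet allow unbounded total variation, so an edge could in principle cross the threshold infinitely often. Likewise, ``finitely many topologies'' does not by itself preclude infinite cycling among them. So neither ingredient in your plan closes the argument without an additional idea.

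The paper supplies two missing pieces. First, a quantitative drift estimate you do not derive: a short matrix computation gives
\[
V(\boldsymbol{x}(t))-V(\boldsymbol{x}(t+1))\ \ge\ \boldsymbol{x}(t)^T\mathcal{L}_t\boldsymbol{x}(t)-\boldsymbol{x}(t+1)^T\mathcal{L}_t\boldsymbol{x}(t+1)
=\ (\boldsymbol{x}(t)-\boldsymbol{x}(t+1))^T\big(diag(\boldsymbol{\lambda}_t\boldsymbol{1})+\boldsymbol{\lambda}_t\big)(\boldsymbol{x}(t)-\boldsymbol{x}(t+1))\ \ge\ \|\boldsymbol{x}(t)-\boldsymbol{x}(t+1)\|^2,
\]
so the tail $\sum_{\tau\ge t_\delta}\|\boldsymbol{x}(\tau+1)-\boldsymbol{x}(\tau)\|^2<\delta^2$ eventually holds. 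Second, and this is the real content, a combinatorial lemma (Lemma~\ref{lemm:eulerian-graph}): if $\delta<1/n^2$ and a switch nevertheless occurs at some $t\ge t_\delta$, one builds a ``balanced graph'' on the positions $x_i(t+1)$ whose \emph{solid} edges are newly created links and whose \emph{dashed} edges are newly broken links; every solid edge is strictly shorter than every dashed edge, and smallness of the step forces each node to be balanced in the sense $s_L(i)-d_L(i)=s_R(i)-d_R(i)$. Starting from the leftmost node and tracing a maximal alternating walk (right along solids, left along dashes) produces a path whose terminal node is provably unbalanced, a contradiction. This two-step lookahead is what rules out switching once movements are small; it does not use convergence of differences and survives the loss of the interval/order structure caused by intersecting with $\mathcal{E}$. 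After the freeze, your Perron--Frobenius finish is fine and matches the paper.
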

\begin{proof}
Given a set of pairwise thresholds $\{\epsilon_{ij}=\epsilon_{ji}>0: i,j\in [n]\}$, and an undirected restricting graph $\mathcal{G}=([n],\mathcal{E})$, let us consider the BCD method applied to the following minimization problem:
\begin{align}\nonumber
&\min f(\boldsymbol{y},\boldsymbol{\lambda}):=\sum_{i,j}\lambda_{ij}\Big((y_i-y_j)^2-\epsilon_{ij}^2\Big)\cr 
& \ \ \ \ \ \ \ \ \boldsymbol{\lambda}\in \Lambda, \ \ \boldsymbol{y}\in \mathbb{R}^n,
\end{align} 
where $\Lambda=\{(\lambda_{ij})\in [0,1]^{n\times n}:  \lambda_{ij}=0, \forall \{i,j\}\notin \mathcal{E}, \lambda_{ij}=\lambda_{ji} \ \forall i,j\}$ is the constraint set for the block variable $\boldsymbol{\lambda}$. The above minimization problem can be rewritten as
\begin{align}\label{eq:restricted}
\min &\sum_{\{i,j\}\in \mathcal{E}}\lambda_{ij}\Big((y_i-y_j)^2-\epsilon_{ij}^2\Big),\cr 
& \ \lambda_{ij}\in[0,1],  \forall \{i,j\}\in \mathcal{E}, \ \ \boldsymbol{y}\in \mathbb{R}^n.
\end{align}
Now given a fixed state variable $\boldsymbol{y}=\boldsymbol{x}(t)$ at a generic time $t$, minimizing \eqref{eq:restricted} with respect to block variable $\boldsymbol{\lambda}$ gives us $\boldsymbol{\lambda}_t$ where
\begin{align}\nonumber
(\boldsymbol{\lambda}_t)_{ij}=(\boldsymbol{\lambda}_t)_{ji} = \begin{cases} 1 & \mbox{if } |x_i(t)-x_j(t)|\leq \epsilon_{ij}, \{i,j\}\in\mathcal{E}  \\
 0 & \mbox{else}. \end{cases}
\end{align}
In other words, minimizing \eqref{eq:restricted} with respect to $\boldsymbol{\lambda}\in \Lambda$ precisely captures the communication structure in the restricted edge-heterogeneous HK model. Now let us fix the network variable to $\boldsymbol{\lambda}_t$. Since $\boldsymbol{\lambda}_t$ represents the adjacency matrix of an undirected graph with corresponding Laplacian $\mathcal{L}_t$, as before $\Phi(\boldsymbol{y},\boldsymbol{\lambda}_t):=\boldsymbol{y}^T\mathcal{L}_t\boldsymbol{y}$ is nonincreasing for the HK update rule over this fixed network $\boldsymbol{\lambda}_t$. Thus for any $\boldsymbol{y}\in\mathbb{R}^n$, if we denote the update matrix of the restricted edge-heterogeneous HK model on the undirected graph $\boldsymbol{\lambda}_t$ by $A_t$, we have $(A_t\boldsymbol{y})^T\mathcal{L}_t(A_t\boldsymbol{y}) \leq \boldsymbol{y}^T\mathcal{L}\boldsymbol{y}$. In particular, by choosing $\boldsymbol{y}=\boldsymbol{x}(t)$, where $\boldsymbol{x}(t)$ denotes the state vector of the restricted edge-heterogeneous HK model at time $t$, we have  $\boldsymbol{x}(t+1)^T\mathcal{L}_t\boldsymbol{x}(t+1) \leq \boldsymbol{x}^T(t)\mathcal{L}_t\boldsymbol{x}(t)$. Therefore, BCD method applied to \eqref{eq:restricted} replicates the dynamics of the restricted edge-heterogeneous HK model. In particular, this shows that $V(\boldsymbol{y}):=\min_{\boldsymbol{\lambda}\in \Lambda}f(\boldsymbol{y},\boldsymbol{\lambda})=\sum_{\{i,j\}\in \mathcal{E}}\Big((y_i-y_j)^2-\epsilon_{ij}^2\Big)^-$
serves as a Lyapunov function for the restricted edge-heterogeneous HK model. 
 
%\footnote{In fact, we believe that by more involved analysis one can extend this result the edge-heterogeneous case.}  
 
In what follows next, we use the above Lyapunov function to establish asymptotic convergence of the restricted HK model to an equilibrium point in the absence of edge heterogeneity (i.e., when all $\epsilon_{ij}$ are the same which by rescaling from now we may assume $\epsilon_{ij}=1,\forall i,j$). To lower bound the decrease of Lyapunov function $V(\cdot)$ at a given time step $t$, we note that this decrease is lower bounded by the decrease amount which is achieved due to the state update. Thus   
\begin{align}\label{eq:hk-drift-restricted}
V(\boldsymbol{x}(t))-V(\boldsymbol{x}(t+1))&\ge \boldsymbol{x}^T(t)\mathcal{L}_t\boldsymbol{x}(t)-\boldsymbol{x}(t+1)^T\mathcal{L}_t\boldsymbol{x}(t+1)\cr 
&=\boldsymbol{x}^T(t)(\mathcal{L}_t-A_t^T\mathcal{L}_tA_t)\boldsymbol{x}(t)\cr 
&=\boldsymbol{x}^T(t)(I-A_t)^T(diag(\boldsymbol{\lambda_t}\boldsymbol{1})+\boldsymbol{\lambda_t})(I-A_t)\boldsymbol{x}(t)\cr
&=(\boldsymbol{x}(t)-\boldsymbol{x}(t+1))^T(diag(\boldsymbol{\lambda_t}\boldsymbol{1})+\boldsymbol{\lambda_t})(\boldsymbol{x}(t)-\boldsymbol{x}(t+1))\cr 
&\ge \|\boldsymbol{x}(t)-\boldsymbol{x}(t+1)\|^2,
\end{align} 
where the first equality is obtained by using $\boldsymbol{x}(t+1)=A_t\boldsymbol{x}(t)$, and the second equality is valid by a simple matrix multiplication and noting that $diag(\boldsymbol{\lambda}_t\boldsymbol{1})A_t=\boldsymbol{\lambda}_t$. Finally the last inequality holds because $\lambda_t$ is the adjacency matrix of a connected undirected graph,\footnote{Here, without loss of generality we may assume $\lambda_t$ to be connected, otherwise for the rest of analysis we can restrict our attention to one of its connected components.} and hence $diag(\boldsymbol{\lambda_t}\boldsymbol{1})+\boldsymbol{\lambda_t}$ is a positive definite matrix whose eigenvalues are greater than or equal to 1. By summing \eqref{eq:hk-drift-restricted} for all $\tau\leq t$, and rearranging the terms we get  
\begin{align}\nonumber
\sum_{\tau=0}^{t}\|\boldsymbol{x}(\tau)-\boldsymbol{x}(\tau+1)\|^2\leq V(\boldsymbol{x}(0))-V(\boldsymbol{x}(t))\leq V(\boldsymbol{x}(0))+n^2,
\end{align}
where the second inequality is valid since by the definition of $V(\cdot)$ we always have $V(\cdot)\ge -n^2$. Thus $\sum_{\tau=0}^{\infty}\|\boldsymbol{x}(\tau)-\boldsymbol{x}(\tau+1)\|^2=\sum_{\tau=0}^{\infty}\sum_{i=1}^{n}(x_i(\tau)-x_i(\tau+1))^2$ is a convergent series, and hence for any $\delta^2>0$, there exists a sufficiently large time $t_{\delta}$ such that $\sum_{\tau=t_{\delta}}^{\infty}\sum_{i=1}^{n}(x_i(\tau)-x_i(\tau+1))^2<\delta^2$.  On the other hand, it is shown in Lemma \ref{lemm:eulerian-graph} that if $\delta<\frac{1}{n^2}$, no switch in the communication network can occur after time $t_{\delta}$. Thus after at most finite time $t_{\delta}$ the communication network of the restricted HK model remains unchanged. This implies that from time $t_{\delta}$ onward the evolution of the dynamics is governed by powers of a fixed stochastic matrix which is well-known to converge to an equilibrium point geometrically fast.
\end{proof}

\begin{remark}
The update matrix $A_t$ of the HK model is the transition matrix of a lazy \emph{simple} random walk on its underlying network $\boldsymbol{\lambda}_t$. However, $\Phi(\boldsymbol{y},\boldsymbol{\lambda}_t):=\boldsymbol{y}^T\mathcal{L}_t\boldsymbol{y}$ serves as a Lyapunov function for any irreducible random walk on the undirected graph $\boldsymbol{\lambda}_t$. As a result, one can allow more general update weights $\{w_{ij}>0, i,j\in[n]\}$ than original weights $\{\frac{1}{k}, k\in[n]\}$ appearing in the update matrices of the HK model, and still use the above analysis to show that the generated dynamics are Lyapunov stable. This can be done by replacing variables $\lambda_{ij}$ by $w_{ij}\lambda_{ij}$ in the above proof.      
\end{remark}
 
\subsection{Asymmetric 0-1 HK Dynamics}
Finally, in this subsection we take one step further and consider a special case of the \emph{node-heterogeneous} HK model. As we mentioned earlier, the dynamics of node-heterogeneous HK model follow exactly the same update rule as homogeneous HK model given in \eqref{eqn:DynamicForm} except that different agents might have different confidence bounds $\epsilon_i, i\in [n]$. Unfortunately, up to the time of writing this paper there is no general result which either proves or disproves convergence of the node-heterogeneous HK model (although some partial results concerning stability of these dynamics are known \cite{su2017partial,etesami2015game,mirtabatabaei2012opinion}). In particular, in the recent work \cite{chazelle2017inertial}, the authors have used an algorithmic approach to show convergence of a special case of the node-heterogeneous HK model, namely 0-1 HK model, in which the confidence bound of each node is restricted to be either 0 or $\epsilon=1$ (i.e., $\epsilon_i\in \{0,1\}, \forall i\in[n]$). It is worth noting that due to heterogeneous confidence bounds, the communication network in the 0-1 HK model is no longer undirected (symmetric). While the proof in \cite{chazelle2017inertial} is fairly long and algorithmic, here we provide a simple argument based on the BCD framework to establish Lyapunov stability of the 0-1 HK dynamics. An important advantage of our approach is that i) it provides an improved Lyapunov drift which can be useful towards convergence rate analysis, and ii) it provides a clear explanation of why the asymmetric 0-1 HK dynamics can still be treated as the symmetric homogeneous HK model. 

\smallskip
\begin{theorem}\label{thm:0-1}
The 0-1 HK dynamics are Lyapunov stable. 
\end{theorem}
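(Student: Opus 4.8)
The plan is to realize the asymmetric $0$-$1$ model as an instance of the \emph{symmetric} restricted edge-heterogeneous model of Theorem~\ref{thm:edge-heterogeneous}, at the price of freezing the stubborn agents by hand. First I would record the elementary but decisive fact that every stubborn agent is stationary: if $\epsilon_i=0$ then $N_i(t)=\{j:x_j(t)=x_i(t)\}$, so agent $i$ averages only over nodes sharing its value and hence $x_i(t+1)=x_i(t)$. I would then run the framework of Section~\ref{sec:framework} on the \emph{symmetric} pairwise thresholds $\epsilon_{ij}:=\max\{\epsilon_i,\epsilon_j\}\in\{0,1\}$ and the objective
\begin{align}\nonumber
f(\boldsymbol{y},\boldsymbol{\lambda}):=\sum_{i,j}\lambda_{ij}\big((y_i-y_j)^2-\epsilon_{ij}^2\big),\qquad \boldsymbol{\lambda}\in\Lambda=\{\boldsymbol{\lambda}\in[0,1]^{n\times n}:\lambda_{ij}=\lambda_{ji}\},
\end{align}
which is exactly the objective of Theorem~\ref{thm:edge-heterogeneous} with $\mathcal{E}$ the complete graph. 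As there, fixing $\boldsymbol{y}=\boldsymbol{x}(t)$ and minimizing over $\boldsymbol{\lambda}\in\Lambda$ selects the extreme point $(\boldsymbol{\lambda}_t)_{ij}=\boldsymbol{1}\{|x_i(t)-x_j(t)|\le\epsilon_{ij}\}$, the adjacency matrix of an undirected graph $\hat{\mathcal{G}}_t$; this realizes the network map $g_2$ of \eqref{eq:network} with $\Phi(\boldsymbol{y},\boldsymbol{\lambda})=\boldsymbol{y}^T\mathcal{L}\boldsymbol{y}$ and $f_1(\boldsymbol{\lambda})=-\sum_{i,j}\lambda_{ij}\epsilon_{ij}^2$.

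The crux of the reduction, and the promised explanation of why the asymmetric dynamics may be treated as symmetric, is that $\hat{\mathcal{G}}_t$ is precisely the graph seen by the \emph{open-minded} agents. Indeed, $\epsilon_i=1$ forces $\epsilon_{ij}=1$ for every $j$, so the neighborhood of an open-minded $i$ in $\hat{\mathcal{G}}_t$ is $\{j:|x_i-x_j|\le 1\}=N_i(t)$. Hence the $0$-$1$ update is exactly the map $g_1$ that lets each open-minded agent average over its $\hat{\mathcal{G}}_t$-neighbors while the stubborn agents are held fixed; the additional edges that $\hat{\mathcal{G}}_t$ attaches to the stubborn nodes are simply never used by $g_1$. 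With the network subproblem thus satisfied, by Proposition~\ref{prop:framework} it only remains to verify the state inequality \eqref{eq:state}, namely $\Phi(\boldsymbol{x}(t+1),\boldsymbol{\lambda}_t)\le\Phi(\boldsymbol{x}(t),\boldsymbol{\lambda}_t)$ for the fixed undirected Laplacian $\mathcal{L}_t$.

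I expect this last verification to be the main obstacle, because freezing a subset of the coordinates breaks the clean identity $\mathcal{L}_t-A_t^T\mathcal{L}_tA_t=(I-A_t)^T(\diag(\boldsymbol{\lambda}_t\boldsymbol{1})+\boldsymbol{\lambda}_t)(I-A_t)$ that powered Theorem~\ref{thm:edge-heterogeneous}. The way around it is to read the update as damped coordinate descent on the convex quadratic $\boldsymbol{y}\mapsto\boldsymbol{y}^T\mathcal{L}_t\boldsymbol{y}$ with the stubborn coordinates frozen as boundary data. A one-line computation gives, for each open-minded $i$,
\begin{align}\nonumber
x_i(t+1)=x_i(t)-\frac{1}{2|N_i(t)|}\,\frac{\partial}{\partial y_i}\big(\boldsymbol{y}^T\mathcal{L}_t\boldsymbol{y}\big)\Big|_{\boldsymbol{y}=\boldsymbol{x}(t)},
\end{align}
so the HK step is a gradient step whose step size $\tfrac{1}{2|N_i(t)|}$ is supplied exactly by the self-loop (laziness) of the walk. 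A quadratic with Hessian $2\mathcal{L}_t$ decreases along such a step as soon as the restriction $\mathcal{L}_t^{\mathrm{free}}$ of $\mathcal{L}_t$ to the open-minded block is dominated by $2\,\diag(|N_i(t)|)$, and this I would establish by the Gershgorin-type estimate $\mathcal{L}_t^{\mathrm{free}}\preceq 2\,\diag(\hat d_i)$ with $\hat d_i<|N_i(t)|$ the degree of $i$ in $\hat{\mathcal{G}}_t$; the strict gap $\hat d_i<|N_i(t)|$ is what turns admissibility into a strict decrease. Carrying the constant through reproduces the per-agent drift $\Phi(\boldsymbol{x}(t))-\Phi(\boldsymbol{x}(t+1))\ge\|\boldsymbol{x}(t+1)-\boldsymbol{x}(t)\|^2$ of \eqref{eq:hk-drift-restricted}, the improved bound advertised before the statement. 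Proposition~\ref{prop:framework} then certifies $V(\boldsymbol{y})=\sum_{i,j}\big((y_i-y_j)^2-\epsilon_{ij}^2\big)^-$ as a Lyapunov function, proving that the $0$-$1$ HK dynamics are Lyapunov stable.
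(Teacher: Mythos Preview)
Your proposal is correct and follows the same high-level route as the paper: both place the $0$-$1$ model inside the BCD framework of Proposition~\ref{prop:framework} by symmetrizing the network (the paper uses the uniform threshold $\epsilon=1$, you use $\epsilon_{ij}=\max\{\epsilon_i,\epsilon_j\}$, a cosmetic difference since the $S_0\times S_0$ entries are inert), observe that the extra edges attached to stubborn nodes are never exercised, and then reduce the problem to showing that the quadratic $\Phi(\boldsymbol{y},\boldsymbol{\lambda}_t)=\boldsymbol{y}^T\mathcal{L}_t\boldsymbol{y}$ decreases under the partially frozen averaging map.

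Where you diverge is in the proof of this last step. The paper partitions $\boldsymbol{\lambda}_t$ and $\mathcal{L}_t$ into $S_0/S_1$ blocks, expands $\boldsymbol{y}^T\mathcal{L}_t\boldsymbol{y}-(A_t\boldsymbol{y})^T\mathcal{L}_t(A_t\boldsymbol{y})$ by brute force, and then recognizes the resulting expression as the quadratic form $(\boldsymbol{y}-A_t\boldsymbol{y})^T\bigl[\begin{smallmatrix}I&0\\0&D_1+R_1\end{smallmatrix}\bigr](\boldsymbol{y}-A_t\boldsymbol{y})$, whose positive definiteness is read off by diagonal dominance. Your route is to recognize the open-minded update as a preconditioned gradient step $x_i\mapsto x_i-\tfrac{1}{2|N_i|}\partial_i\Phi$ and invoke the descent lemma for quadratics, which requires only $\mathcal{L}_t^{\mathrm{free}}\preceq 2\,\diag(|N_i|)$; you then verify this via the Gershgorin bound $\mathcal{L}_t^{\mathrm{free}}\preceq 2\,\diag(\hat d_i)$ together with $\hat d_i=|N_i|-1$. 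If one tracks the constants in your argument, the drift comes out as $(\boldsymbol{y}-\boldsymbol{y}')^T\bigl(2\,\diag(|N_i|)-\mathcal{L}_t^{\mathrm{free}}\bigr)(\boldsymbol{y}-\boldsymbol{y}')$ on the free block, and this matrix is exactly the paper's $D_1+R_1$; so the two computations land on the same identity, not merely the same inequality. The paper's calculation is self-contained but opaque; your descent-lemma reading is shorter, explains \emph{why} the self-loop (laziness) is what makes the step size admissible, and generalizes immediately to any frozen subset on any undirected graph without redoing the block algebra.
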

\begin{proof}
Let us consider the same function $f(\boldsymbol{y},\boldsymbol{\lambda})$ as in \eqref{eq:HK}. We show that this function is nonincreasing over the trajectories of the 0-1 HK. Let $S_0$ and $S_1=[n]\setminus S_0$ denote the set of agents with confidence bounds $0$ and $\epsilon=1$, respectively. As before, given a fixed state variable $\boldsymbol{y}:=\boldsymbol{x}(t)$, minimizing $f$ with respect to $\boldsymbol{\lambda}\in \Lambda:=[0,1]^{n\times n}$ we obtain $(\boldsymbol{\lambda}_t)_{ij}=(\boldsymbol{\lambda}_t)_{ji}=1$ if $|x_i(t)-x_j(t)|\leq \epsilon$, and $(\boldsymbol{\lambda}_t)_{ij}=(\boldsymbol{\lambda}_t)_{ji}=0$, otherwise. This correctly captures the communication links adjacent to the agents in $S_1$ of the actual communication network in the 0-1 HK model at time $t$. However, it is possible that $\boldsymbol{\lambda}_t$ incorrectly sets $(\boldsymbol{\lambda}_t)_{ij}=1$ for an agent $i\in S_0$ so that $\boldsymbol{\lambda}_t$ and the actual 0-1 HK communication network at time $t$ can only deviate from each other on edges $\{(i,j), i\in S_0\}$. Nevertheless, as far as it concerns the agents in $S_0$, this will not be an issue since the agents in $S_0$ will never use their adjacent links to update their states (these agents are always fixed). Therefore, we are only left to show that for fixed undirected graph $\boldsymbol{\lambda}_t$ with corresponding Laplacian $\mathcal{L}_t$, $\Phi(\boldsymbol{y},\boldsymbol{\lambda}_t)=\boldsymbol{y}^T\mathcal{L}_t\boldsymbol{y}$ still serves as a Lyapunov function for the 0-1 HK update rule.

Given an arbitrary state vector $\boldsymbol{y}$, let us decompose it into $\boldsymbol{y}=(\boldsymbol{y}_0,\boldsymbol{y}_1)$, where $\boldsymbol{y}_0$ and $\boldsymbol{y}_1$ are associated to the agents in $S_0$ and $S_1$, respectively. Define $R_0:=\boldsymbol{\lambda}_t[S_0]$ and $R_1:=\boldsymbol{\lambda}_t[S_1]$ to be the adjacency matrices induced by $\boldsymbol{\lambda}_t$ over the agents in $S_0$ and $S_1$, respectively. Moreover, let $M:=\boldsymbol{\lambda}_t[S_1,S_0]$ be the adjacency matrix of the bipartite graph induced by $\boldsymbol{\lambda}_t$ between agents in $S_0$ and $S_1$. Therefore, we can write 
\begin{align}\nonumber
\boldsymbol{\lambda}_t=
\left[
\begin{array}{c|c}
R_0 & M^T \\
\hline
M & R_1
\end{array}
\right], \ \ \mathcal{L}_t=
\left[
\begin{array}{c|c}
D_0-R_0 & -M^T \\
\hline
-M & D_1-R_1
\end{array}\right],
\end{align} 
where $D_0=diag(R_0\boldsymbol{1})$ and $D_1=diag(R_1\boldsymbol{1})$. Moreover, the actual 0-1 HK update rule at time $t$ can be written as $\boldsymbol{x}(t+1)=A_t\boldsymbol{x}(t)$, where
\begin{align}\nonumber
A_t=
\left[
\begin{array}{c|c}
I & \boldsymbol{0} \\
\hline
D_1^{-1}M & D_1^{-1}R_1
\end{array}
\right].
\end{align}
Since $A_t\boldsymbol{y}=(\boldsymbol{y}_0,D_1^{-1}M\boldsymbol{y}_0+D_1^{-1}R_1\boldsymbol{y}_1)$, we can write
\begin{align}\nonumber
\boldsymbol{y}^T\mathcal{L}_t\boldsymbol{y}&=\boldsymbol{y}^T_0(D_0\!-\!R_0)\boldsymbol{y}_0+\boldsymbol{y}^T_1(D_1\!-\!R_1)\boldsymbol{y}_1\!-\!2\boldsymbol{y}^T_1 M\boldsymbol{y}_0,\cr 
\boldsymbol{y}^T\!A_t^T\mathcal{L}_tA_t\boldsymbol{y}\!&=\!\boldsymbol{y}^T_0(\!D_0\!-\!R_0\!)\boldsymbol{y}_0\!-\!2(\!D_1^{-1}\!M\boldsymbol{y}_0\!+\!D_1^{-1}\!R\boldsymbol{y}_1\!)^T\!M\boldsymbol{y}_0\cr 
&\qquad\!+\!(D_1^{-1}M\boldsymbol{y}_0\!+\!D_1^{-1}R\boldsymbol{y}_1)^T(D_1\!-\!R_1)(D_1^{-1}M\boldsymbol{y}_0\!+\!D_1^{-1}R\boldsymbol{y}_1).
\end{align} 
Subtracting these two expressions from each other, using the fact that $R_1^T=R_1$, and simplifying the terms we obtain
\begin{align}\label{eq:0-1hk-right}
\boldsymbol{y}^T\mathcal{L}_t\boldsymbol{y}-\boldsymbol{y}^TA_t^T\mathcal{L}_tA_t\boldsymbol{y}&=\boldsymbol{y}^T_0\Big(2M^TD_1^{-1}M-M^TD_1^{-1}(D_1-R_1)D_1^{-1}M\Big)\boldsymbol{y}_0\cr 
&\qquad+\boldsymbol{y}^T_1\Big(D_1-R_1-R_1D_1^{-1}(D_1-R_1)D_1^{-1}R_1\Big)\boldsymbol{y}_1\cr 
&\qquad+2\boldsymbol{y}^T_1\Big(R_1D_1^{-1}R_1D_1^{-1}M-M\Big)\boldsymbol{y}_0.
\end{align} 
Now a straightforward calculation shows that the right-hand side of \eqref{eq:0-1hk-right} can be factorized as $P^T(D_1+R_1)P$, where 
$P:=(I-D_1^{-1}R_1)\boldsymbol{y}_1-D_1^{-1}M\boldsymbol{y}_0$. Finally, since $\boldsymbol{y}-A_t\boldsymbol{y}=(\boldsymbol{0},(I-D_1^{-1}R_1)\boldsymbol{y}_1-D_1^{-1}M\boldsymbol{y}_0)$, we can rewrite \eqref{eq:0-1hk-right} as
 \begin{align}\label{eq:last-drift}
\boldsymbol{y}^T\!\mathcal{L}_t\boldsymbol{y}\!-\!(A_t\boldsymbol{y})^T\!\mathcal{L}_t(A_t\boldsymbol{y})\!=\!(\boldsymbol{y}\!-\!A_t\boldsymbol{y})^T\!\!\left[\!\!
\begin{array}{cc}
I & \!\!\!\boldsymbol{0} \\
\boldsymbol{0} &  \!\!\! D_1\!+\!R_1
\end{array}
\!\!\!\!\right]\!\!(\boldsymbol{y}\!-\!A_t\boldsymbol{y}).
\end{align} 
Therefore, if we define $Q$ to be the middle matrix in \eqref{eq:last-drift}, $Q$ would be a positive definite matrix (as its diagonal elements are strictly positive and dominate the row-sums) and thus $\boldsymbol{y}^T\mathcal{L}_t\boldsymbol{y}-(A_t\boldsymbol{y})^T\!\mathcal{L}_t(A_t\boldsymbol{y})=\|\boldsymbol{y}-A_t\boldsymbol{y}\|^2_Q\ge 0$.  Finally, choosing $\boldsymbol{y}=\boldsymbol{x}(t)$ we get, 
\begin{align}\nonumber
\boldsymbol{x}^T\!(t)\mathcal{L}_t\boldsymbol{x}(t)\!-\!\boldsymbol{x}^T\!(t\!+\!1)\mathcal{L}_t\boldsymbol{x}(t\!+\!1)\!=\!\|\boldsymbol{x}_1(t)\!-\!\boldsymbol{x}_1(t\!+\!1)\|^2_{Q}.
\end{align} 
Therefore, $V(\boldsymbol{y}):=\min_{\boldsymbol{\lambda}\in[0,1]^{n^2}}f(\boldsymbol{y},\boldsymbol{\lambda})=\sum_{i,j}\big((y_i-y_j)^2-\epsilon^2\big)^-$ serves as a Lyapunov function for the 0-1 HK model so that $V(\boldsymbol{x}(t))-V(\boldsymbol{x}(t+1))\ge \|\boldsymbol{x}_1(t)\!-\!\boldsymbol{x}_1(t\!+\!1)\|^2_{Q}$.           
\end{proof}

\begin{remark}
One can view the update matrix $A_t$ of the 0-1 HK model at a given time $t$ as the transition matrix of a lazy simple random walk with absorbing states $S_0$ on the fixed \emph{actual} communication graph at time $t$. Therefore, in the second part of the proof of Theorem \ref{thm:0-1} we have shown that although the actual graph might have one-sided directed edges from $S_1$ to the absorbing states $S_0$, $\boldsymbol{y}^T\mathcal{L}_t\boldsymbol{y}$ still serves as a Lyapunov function for such absorbing random walks where $\mathcal{L}_t$ is the Laplacian of the \emph{symmetrized} actual network $\boldsymbol{\lambda}_t$ (i.e., viewing one-sided edges as undirected edges).  
\end{remark}

As we close this section, we would like to mention that unlike homogeneous HK model which is known to converge to an equilibrium point after finitely many steps \cite{hegselmann2002opinion}, it may take arbitrarily long time until the 0-1 HK dynamics converge. In fact, it seems impossible to show that the drift of the above Lyapunov function is bounded below by a time-invariant quantity (as is the case for homogeneous HK model \cite{etesami2015game}). As an example, consider a set of $n=3$ agents initially positioned at $x_1(0)=-1+\frac{1}{2^{2^m}}$, $x_2(0)=0$, and $x_3(0)=1$ where $m$ can be any arbitrary large integer. Also assume $\epsilon_1=\epsilon_2=0$, and $\epsilon_3=1$ so that agent $3$ is the only moving agent. Then it takes $2^m$ steps until a switch in the communication network occurs so that agent $3$ be able to observe agent $1$. In particular, the drift of the above Lyapunov function during iteration $t\in \{1,\ldots,2^m\}$ is $\frac{1}{2^t}-\frac{1}{2^{t+1}}=\frac{1}{2^{t+1}}$ which can be arbitrarily small.

\section{Nearest Neighbor Opinion Dynamics}\label{sec:nearest-neighbor}
In general, loosing symmetry in communication networks of multi-agent systems can substantially complicate their stability analysis which in turn requires novel techniques. In fact, unlike the symmetric case, existing results concerning stability of state-dependent networks of multi-agent systems with asymmetric communication typologies are quite limited. Nevertheless, this shall not eliminate the possibility of convergence of asymmetric dynamics to an equilibrium point, as it is shown in this section for a special class of \emph{nearest neighbor} dynamics. More specifically, in this section we establish convergence of a class of  nearest neighbor dynamics under both \emph{asynchronous} and \emph{synchronous} settings, where in the former at each time instance only one of the agents updates its opinion (state), while in the latter at each iteration all the agents simultaneously update their opinions.

\subsection{Asynchronous Nearest Neighbor Dynamics}
Consider a set of $[n]$ agents where the opinion of agent $i\in[n]$ at time $t=0,1,2,\ldots$ is given by a vector $x_i(t)\in \mathbb{R}^d$. At each iteration $t$ one agent $i\in[n]$ is selected based on some selection rule (e.g. uniformly at random) and updates its opinion at the next time step to $x_i(t+1)=\mu_i x_i(t)+(1-\mu_i)x_{r(i)}(t)$, where here $r(i):=\arg\min_{j\in [n]\setminus\{i\}}\|x_i(t)-x_j(t)\|$ denotes the closest agent to $i$ with respect to profile $\boldsymbol{x}(t)\in \mathbb{R}^{n\times d}$, and $\mu_i\in(0,1)$ is an agent-specific parameter. For all other agents $j\neq i$, we set $x_j(t+1)=x_j(t)$. 

%\footnote{It is worth noting that $r(i)$ itself depends on the opinion profile $\boldsymbol{x}(t)$. However, for simplicity of notation we drop this dependency by keeping in mind that $r(\cdot)$ is defined with respect to its underlying opinion profile $\boldsymbol{x}(t)$.}

The rationale behind introducing nearest neighbor dynamics is that often individuals get influenced by their closest friend/partner/leader so that depending on their stubbornness (captured by $\mu_i$) they are willing to compromise in order to get closer to their friend/partner/leader. It is important to note that the communication network in the nearest neighbor dynamics is asymmetric so that if $r(i)$ is the closest agent to $i$, it does not imply that $i$ is also the closest agent to $r(i)$ (i.e., in general $r(r(i))\neq i$).\footnote{In fact, one can show that the communication network at each time instance is comprised of disjoint directed trees where the out-degree of each node is equal to 1.} Moreover, the communication network which determines the ``closest relationships" evolves as a function of agents' opinions which can switch many times based on trajectory of the dynamics. Nevertheless, as we shall see in Theorem \ref{thm:asynchronous-nearest neighbor} such heterogeneous asymmetric state-dependent dynamics will converge to an \emph{equilibrium} point as defined below: 

\smallskip
\begin{definition}
An $\epsilon$-equilibrium for the nearest neighbor dynamics is an opinion profile where the maximum distance between every agent's opinion and its closest neighbor is at most $\epsilon>0$. Moreover, given an initial opinion profile $\boldsymbol{x}(0)$, we let $t_\epsilon$ be the first time instance when $\boldsymbol{x}(t_{\epsilon})$ becomes an $\epsilon$-equilibrium.
\end{definition}

It is worth noting that the definition of $\epsilon$-equilibrium implies that each agent will lie within a distance of at most $\epsilon n$ from its limit point. This is because if the communication network in an $\epsilon$-equilibrium is connected, the maximum distance between every two nodes is at most $\epsilon n$ so that the convex hull of all the opinions can have diameter at most $\epsilon n$. Since the nearest neighbor dynamics evolve inside of this convex hull for all the future iterations, the limit point (if it exists) will also lie in this convex hull. Similarly, if an $\epsilon$-equilibrium contains more than one connected component, then either the distance between their convex hulls is less than $\epsilon$, in which case a similar argument as above for the convex hull of the union of those components can be applied, or the distance between those components is more than $\epsilon$, in which case those components evolve separately from each other so that the limit points of each component lie within its own convex hull (and again the above argument applies).

\smallskip
\begin{definition}
Given two vectors $\boldsymbol{u},\boldsymbol{v}\in \mathbb{R}^n$, we say $\boldsymbol{u}$ is lexicographically smaller than $\boldsymbol{v}$ ($\boldsymbol{u}<_{Lex}\boldsymbol{v}$), if there exists some $k\leq n$ for which $u_1=v_1,\ldots, u_{k-1}=v_{k-1}$, and $u_{k}<v_{k}$. Note that there is no specific relation between components of $\boldsymbol{u}$ and $\boldsymbol{v}$ for indices larger than $k$.  
\end{definition}

\smallskip
\begin{theorem}\label{thm:asynchronous-nearest neighbor}
The asynchronous nearest neighbor dynamics are Lyapunov stable and asymptotically converge to an equilibrium point. Moreover, if at each iteration an agent is selected uniformly at random to update its opinion, then the expected number of steps until the dynamics reach an $\epsilon$-equilibrium is bounded above by $\mathbb{E}[t_\epsilon]\leq \frac{n2^nD_0}{(1-\mu_{\max})\epsilon}$, where $D_0=\max_{i,j} \|x_i(0)-x_{j}(0)\|$ and $\mu_{\max}=\max_i\mu_i$.
\end{theorem}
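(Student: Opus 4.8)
The plan is to exhibit a lexicographically valued Lyapunov function and then convert its descent into a quantitative step count. For the stability claim I would work with the vector $\Phi(\boldsymbol{x})$ obtained by listing all pairwise distances $\{\|x_i-x_j\|:i<j\}$ in nondecreasing order and comparing two configurations through the order $<_{Lex}$ of the preceding definition. The first thing to record is that the update $x_i(t+1)=\mu_i x_i(t)+(1-\mu_i)x_{r(i)}(t)$ places the moving agent on the segment $[x_i(t),x_{r(i)}(t)]$, so that $\mathrm{conv}\{x_j(t+1)\}\subseteq\mathrm{conv}\{x_j(t)\}$; hence the diameter $D(t)=\max_{i,j}\|x_i(t)-x_j(t)\|$ is nonincreasing, the trajectory stays inside $\mathrm{conv}\{x_j(0)\}$, and $D(t)\le D_0$ for all $t$. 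The substantive step is to check that $\Phi$ is nonincreasing for $<_{Lex}$: a move changes only the distances incident to $i$, it strictly contracts the distance to the current nearest neighbour (from $d_i=\|x_i-x_{r(i)}\|$ to $\mu_i d_i$), and because $r(i)$ realises the minimum of the incident distances the newly created small distance is smaller than the one it replaces, so a short case analysis shows that the first coordinate in which the sorted vector changes can only decrease. This is exactly the hypothesis needed to place the dynamics inside the framework of Proposition~\ref{prop:framework} with $S$ the lexicographically ordered set, giving Lyapunov stability.

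To obtain convergence and the rate I would isolate the moves that matter. Call a move \emph{large} if the selected agent satisfies $d_i>\epsilon$; such a move displaces that agent by exactly $(1-\mu_i)d_i\ge(1-\mu_{\max})\epsilon$. The crux of the whole argument is a deterministic bound on total travel, namely $\sum_{t\ge 0}\|x_{i_t}(t+1)-x_{i_t}(t)\|\le 2^n D_0$, where $i_t$ is the agent selected at time $t$; I expect this to be the main obstacle. My plan for it is an induction on $n$ organised around the current mutual-nearest-neighbour (closest) pair: for two agents the displacement of the mover equals the decrease of their separation, so the total travel is at most $D_0$, while for the inductive step one contracts the current closest pair and treats it as a single agent, the error caused by identifying its two points being controlled by their separation and at most doubling the travel budget at each of the $n$ levels, which is where the factor $2^n$ comes from. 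Dividing this budget by the per-move floor $(1-\mu_{\max})\epsilon$ bounds the number of large moves by $2^nD_0/((1-\mu_{\max})\epsilon)$; since the total travel is finite the configuration is Cauchy and hence converges, and, each agent being selected infinitely often, every nearest-neighbour distance must tend to zero in the limit, so the limit is an equilibrium.

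Finally I would pass from the deterministic count to the expected hitting time. As long as $\boldsymbol{x}(t)$ is not an $\epsilon$-equilibrium there is at least one agent with $d_i>\epsilon$, and under uniform selection it is chosen with probability $1/n$, so the number $M(t)$ of large moves made up to time $t$ satisfies $\mathbb{E}[M(t+1)-M(t)\mid\mathcal{F}_t]\ge 1/n$ on $\{t<t_\epsilon\}$; equivalently $M(t\wedge t_\epsilon)-(t\wedge t_\epsilon)/n$ is a submartingale. Applying the optional stopping theorem and using the deterministic ceiling $M(t_\epsilon)\le 2^nD_0/((1-\mu_{\max})\epsilon)$ gives $\tfrac1n\mathbb{E}[t_\epsilon]\le \mathbb{E}[M(t_\epsilon)]\le 2^nD_0/((1-\mu_{\max})\epsilon)$, which rearranges to the claimed $\mathbb{E}[t_\epsilon]\le n2^nD_0/((1-\mu_{\max})\epsilon)$. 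The two places demanding genuine care are the lexicographic monotonicity of $\Phi$, precisely the case analysis when a move enlarges some incident distances while shrinking the nearest-neighbour one, and, above all, the $2^n$ travel bound, whose inductive merging step is where the real work lies.
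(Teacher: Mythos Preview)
Your overall architecture—lexicographic Lyapunov function for stability, then a quantitative descent inequality for the rate, then a martingale/stopping argument—matches the paper's. The two substantive differences, and the one real gap, are as follows.

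\smallskip
\textbf{Choice of Lyapunov function.} You sort \emph{all} $\binom{n}{2}$ pairwise distances; the paper sorts only the $n$ nearest-neighbour distances $\bigl(\|x_k-x_{r(k)}\|\bigr)_{k\in[n]}$. Your choice can be made to work for the stability claim (the case analysis you allude to goes through: positions before the first slot occupied by a distance incident to the mover are unchanged, and that slot strictly drops), but it is heavier than needed and, crucially, it does not feed into the rate bound. Scalarising your $\binom{n}{2}$-vector with geometric weights would produce a $2^{\binom{n}{2}}$ rather than a $2^n$. The paper's $n$-vector is exactly what makes the $2^n$ appear.

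\smallskip
\textbf{The rate argument and the $2^n D_0$ travel bound.} This is where your proposal has a genuine gap. You reduce everything to the deterministic inequality $\sum_{t\ge 0}\|x_{i_t}(t+1)-x_{i_t}(t)\|\le 2^n D_0$ and propose to prove it by ``contracting the current closest pair and treating it as a single agent'', with the error ``at most doubling the travel budget at each level''. This induction does not go through as stated: the closest pair changes over time, so there is no fixed pair to contract; once you identify two agents, a third agent's nearest neighbour (and hence its trajectory) can change discontinuously; and the ``doubling'' of the error is asserted but not derived from any inequality. You yourself flag this as ``where the real work lies'', and indeed the sketch does not supply that work.

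The paper obtains the same $2^n$ by a different mechanism that avoids any merging. It defines the scalar
\[
\hat V(\boldsymbol{x})=\sum_{i=1}^{n} 2^{\,n-i}\,\min\nolimits_{i}\bigl(\{\|x_k-x_{r(k)}\|:k\in[n]\}\bigr),
\]
i.e.\ the sorted nearest-neighbour distances weighted by decreasing powers of $2$, and proves directly that if agent $\ell$ moves then
\[
\hat V(\boldsymbol{x}(t))-\hat V(\boldsymbol{x}(t+1))\ \ge\ (1-\mu_\ell)\,\|x_\ell(t)-x_{r(\ell)}(t)\|,
\]
which is exactly the displacement of the mover. The proof splits the sum at the rank $\ell$ of the mover's nearest-neighbour distance: for indices $i\le\ell$ the sorted entries cannot increase (this is the lexicographic part), while for $i>\ell$ each sorted entry can increase by at most the displacement $(1-\mu_\ell)\|x_\ell-x_{r(\ell)}\|$; the geometric weights $2^{n-i}$ then make the potential gain on $i\le\ell$ dominate the potential loss on $i>\ell$ by exactly one unit of displacement. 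Summing over $t$ yields your total-travel bound $\sum_t\|x_{i_t}(t+1)-x_{i_t}(t)\|\le \hat V(\boldsymbol{x}(0))\le 2^n D_0$ as a corollary, after which your optional-stopping step (which is fine) finishes the argument. So the inequality you want is true, but the route to it is the weighted Lyapunov $\hat V$, not an inductive pair-contraction.
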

\begin{proof}
Given a vector $\boldsymbol{v}$, let $sort(\boldsymbol{v})$ be a vector obtained by sorting all the components of $\boldsymbol{v}$ in a nondecreasing order. Let $f:[0,1]^{n^2}\times\mathbb{R}^{n\times d}\to\mathbb{R}^{n}$ be the vector function $f(\boldsymbol{\lambda},\boldsymbol{y}):=sort\Big(\sum_{j=1}^n\lambda_{ij}\|y_i-y_j\|, i\in[n]\Big)$, and consider the \emph{lexicographical} minimization problem, $\min_{Lex} \{f(\boldsymbol{\lambda},\boldsymbol{y}): \boldsymbol{y}\in \mathbb{R}^{n\times d}, \boldsymbol{\lambda}\in \Lambda\}$, where $\Lambda=\{(\lambda_{ij})\in [0,1]^{n^2}: \sum_{j=1}^{n}\lambda_{ij}=1,\ \lambda_{ii}=0, \forall i\in [n]\}$.\footnote{In terms of Proposition \ref{prop:framework} terminology, here we have $(S,\leq_S)=(\mathbb{R}^n,<_{Lex})$.} Now given a fixed block variable $\boldsymbol{y}=\boldsymbol{x}(t)$ at a generic time $t$, minimizing $f(\boldsymbol{\lambda},\boldsymbol{x}(t))$ lexicographically with respect to $\boldsymbol{\lambda}\in \Lambda$ gives us,
\begin{align}\label{eq:match-network}
(\boldsymbol{\lambda}_t)_{ij}=\begin{cases}
1 & \mbox{if } j=r(i), \\ 
0 & \mbox{else}.
\end{cases} 
\end{align}
This is because minimizing $f(\boldsymbol{\lambda},\boldsymbol{x}(t))$ over $\boldsymbol{\lambda}\in \Lambda$ decomposes into minimizing $f(\cdot)$ componentwise, and it is achieved by setting the coefficient $\lambda_{ij}$ corresponding to the smallest term $\|x_i(t)-x_{r(i)}(t)\|$ of the $i$th component equal to 1, and to 0, otherwise. As a result, given a fixed state $\boldsymbol{y}=\boldsymbol{x}(t)$, minimizing $f(\boldsymbol{\lambda},\boldsymbol{x}(t))$ with respect to $\boldsymbol{\lambda}\in S$ accurately captures the directed communication network $\boldsymbol{\lambda}_t$ of the nearest neighbor dynamics for the state $\boldsymbol{x}(t)$. Next let us fix the communication network to $\boldsymbol{\lambda}_t$ so that
\begin{align}\nonumber
f(\boldsymbol{\lambda}_t,\boldsymbol{x}(t))&=sort\Big(\|x_i(t)-x_{r(i)}(t)\|, i\in[n]\Big)\cr 
&=\Big(\|x_1(t)-x_{r(1)}(t)\|,\ldots,\|x_n(t)-x_{r(n)}(t)\|\Big),
\end{align}where in the second equality and without loss of generality (by relabeling the agents if necessary) we have assumed that $\|x_1(t)-x_{r(1)}(t)\|\leq\ldots\leq \|x_n(t)-x_{r(n)}(t)\|$. 

To study the effect of state update on $f(\boldsymbol{\lambda}_t,\boldsymbol{x}(t))$, let us assume that at time $t$ agent $\ell\in[n]$ is selected to update its opinion. Then we obtain $\boldsymbol{x}(t+1)$ for which $x_{\ell}(t+1)=\mu_{\ell}x_{\ell}(t)+(1-\mu_{\ell})x_{r(\ell)}(t)$, and $x_j(t+1)=x_j(t), \forall j\neq \ell$. In particular,
\begin{align}\label{eq:lth-agent}
\|x_{\ell}(t\!+\!1)-x_{r'(\ell)}(t\!+\!1)\|&\leq \|x_{\ell}(t\!+\!1)-x_{r(\ell)}(t\!+\!1)\|\cr 
&=\|x_{\ell}(t\!+\!1)-x_{r(\ell)}(t)\|=\mu_{\ell}\|x_{\ell}(t)-x_{r(\ell)}(t)\|\cr 
&<\|x_{\ell}(t)-x_{r(\ell)}(t)\|,
\end{align}              
where $r'(\ell)$ denotes the closest agent to $\ell$ with respect to the opinion profile $\boldsymbol{x}(t+1)$. Furthermore, for every $i<\ell$ we have two possibilities: Case I) $r(i)\neq \ell$, in which case
\begin{align}\nonumber
\|x_{i}(t\!+\!1)\!-\!x_{r'(i)}(t\!+\!1)\|&=\min\{\|x_{i}(t)\!-\!x_{r(i)}(t)\|,\|x_{i}(t)\!-\!x_{\ell}(t\!+\!1)\|\}\leq\|x_{i}(t)\!-\!x_{r(i)}(t)\|.
\end{align}
Case II) $r(i)=\ell$, in which case by definition of $r(\ell)$ we must have $\|x_{\ell}(t)-x_{r(\ell)}(t)\|\leq \|x_{\ell}(t)-x_{i}(t)\|=\|x_{r(i)}(t)-x_{i}(t)\|$. However, as $i<\ell$ (and thus $\|x_{r(i)}(t)-x_{i}(t)\|\leq \|x_{\ell}(t)-x_{r(\ell)}(t)\|$), one can see that Case II cannot happen unless $\|x_{\ell}(t)-x_{r(\ell)}(t)\|=\|x_{i}(t)-x_{r(i)}(t)\|$. As a result $r(\ell)=i$, and we can write
\begin{align}\nonumber
\|x_{i}(t\!+\!1)-x_{r'(i)}(t\!+\!1)\|&=\|x_i(t)-x_{\ell}(t\!+\!1)\|\cr 
&=\|x_i(t)-\mu_{\ell}x_{\ell}(t)-(1-\mu_{\ell})x_{r(\ell)}(t)\|\cr 
&=\|x_i(t)-\mu_{\ell}x_{r(i)}(t)-(1-\mu_{\ell})x_{i}(t)\|\cr 
&=\mu_{\ell} \|x_{i}(t)-x_{r(i)}(t)\|<\|x_{i}(t)-x_{r(i)}(t)\|.
\end{align}  

Therefore, we have shown that $\|x_{i}(t\!+\!1)-x_{r'(i)}(t\!+\!1)\|\leq \|x_{i}(t)-x_{r(i)}(t)\|, \forall i<\ell$, and moreover $\|x_{\ell}(t\!+\!1)-x_{r'(\ell)}(t\!+\!1)\|<\|x_{\ell}(t)-x_{r(\ell)}(t)\|$. In other words, after agent $\ell$'s update, $f(\boldsymbol{\lambda}_t,\boldsymbol{x}(t))$ decreases lexicographically, i.e., $f(\boldsymbol{\lambda}_{t+1},\boldsymbol{x}(t\!+\!1))\!<_{Lex}\!f(\boldsymbol{\lambda}_t,\boldsymbol{x}(t))$. In particular, $V(\boldsymbol{y}):=\min_{\boldsymbol{\lambda}\in \Lambda}f(\boldsymbol{\lambda},\boldsymbol{y})= sort\Big(\|y_k-y_{r(k)}\|, k\in[n]\Big)$ serves as a Lyapunov function for the asynchronous nearest neighbor dynamics. 

To show convergence of the asynchronous dynamics to an equilibrium point, let us convert $V(\boldsymbol{y})$ into a scalar function $\hat{V}(\boldsymbol{y}):=\sum_{i=1}^{n} \min_{i}(\{\|y_k-y_{r(k)}\|, k\in[n]\})2^{n-i}$ by giving appropriate weights to its coordinates, where here $\min_i(S)$ denotes the $i$th smallest element of a finite set $S$.\footnote{This conversion encodes lexicographical decrease of $V(\boldsymbol{y})$ into a scalar decrease in $\hat{V}(\boldsymbol{y})$. This will allow us to quantify a convergence rate for the asynchronous nearest neighbor dynamics.} As before, let us assume that at time $t$ agent $\ell$ is selected to update its opinion, and $\min_i(\{\|x_k(t)-x_{r(k)}(t)\|, k\in[n]\})=\|x_i(t)-x_{r(i)}(t)\|$. For any $i\leq \ell$, we have 
\begin{align}\label{eq:ai-bi-positive}
&\min_i\Big(\{\|x_k(t+1)-x_{r'(k)}(t+1)\|, k\in[n]\}\Big)\cr 
&\qquad\qquad\leq \max\{\|x_k(t+1)-x_{r'(k)}(t+1)\|, k\in [i]\}\cr
&\qquad\qquad\leq \max\{\|x_k(t)-x_{r'(k)}(t)\|, k\in [i]\}=\|x_i(t)-x_{r(i)}(t)\|,
\end{align}
where the last inequality is by $\|x_{k}(t+1)-x_{r'(k)}(t+1)\|\leq \|x_{k}(t)-x_{r(k)}(t)\|, \forall k\leq \ell$. Now we can write,
\begin{align}\label{eq:i<l}
&\sum_{i\leq \ell}\Big(\min_i\big(\{\|x_k(t)-x_{r(k)}(t)\|, k\in[n]\}\big)-\min_i\big(\{\|x_k(t+1)\!-\!x_{r'(k)}(t+1)\|, i\in[n]\}\big)\Big)2^{n-i}\cr
&\qquad=\sum_{i\leq \ell}\Big(\|x_i(t)-x_{r(i)}(t)\|-\min_i\big(\{\|x_k(t+1)\!-\!x_{r'(k)}(t+1)\|, k\in[n]\}\big)\Big)2^{n-i}\cr 
&\qquad\ge2^{n-\ell}\sum_{i\leq \ell}\Big(\|x_i(t)-x_{r(i)}(t)\|-\min_i(\{\|x_k(t+1)\!-\!x_{r'(k)}(t+1)\|, k\in[n]\})\Big)\cr
&\qquad=2^{n-\ell}\Big(\sum_{i\leq \ell}\|x_i(t)-x_{r(i)}(t)\|-\sum_{i\leq \ell}\min_i(\{\|x_k(t+1)\!-\!x_{r'(k)}(t+1)\|, k\in[n]\})\Big)\cr
&\qquad\ge 2^{n-\ell}\Big(\sum_{i\leq \ell}\|x_i(t)-x_{r(i)}(t)\|-\sum_{i\leq \ell}\|x_i(t+1)\!-\!x_{r'(i)}(t+1)\|\Big)\cr 
&\qquad=2^{n-\ell}\sum_{i\leq \ell}\Big(\|x_i(t)-x_{r(i)}(t)\|-\|x_i(t+1)\!-\!x_{r'(i)}(t+1)\|\Big)\cr 
&\qquad\ge 2^{n-\ell}(1-\mu_{\ell})\|x_{\ell}(t)-x_{r(\ell)}(t)\|,
\end{align}
where the first inequality is due \eqref{eq:ai-bi-positive}, and the second inequality holds because sum of $\ell$ smallest elements of a set is always smaller than sum of any $\ell$ elements in that set. Finally, the last inequality is valid since the first $\ell-1$ summands are nonnegative and by \eqref{eq:lth-agent} the $\ell$th summand is greater than or equal to $(1-\mu_{\ell})\|x_{\ell}(t)-x_{r(\ell)}(t)\|$.   

On the other hand, we note that for any agent $k$, $\|x_k(t+1)\!-\!x_{r'(k)}(t+1)\|\leq \|x_k(t)\!-\!x_{r'(k)}(t)\|+(1-\mu_{\ell})\|x_k(t)\!-\!x_{r'(k)}(t)\|$. This is because the amount of movement of agent $\ell$ at time $t$ is equal to $\|x_{\ell}(t+1)-x_{\ell}(t)\|=(1-\mu_{\ell})\|x_{\ell}(t)-x_{r(\ell)}(t)\|$. Therefore, by triangle inequality the distance of any agent $k$ to its nearest neighbor at the next time step can increase by at most $(1-\mu_{\ell})\|x_{\ell}(t)-x_{r(\ell)}(t)\|$. Now similar as in \eqref{eq:ai-bi-positive} we can write,   
\begin{align}\label{eq:ai-bi-negative}
&\min_i\Big(\{\|x_k(t+1)-x_{r'(k)}(t+1)\|, k\in[n]\}\Big)\cr 
&\qquad\qquad\leq \max\{\|x_k(t+1)-x_{r'(k)}(t+1)\|, k\in [i]\}\cr 
&\qquad\qquad\leq \max\{\|x_k(t)-x_{r'(k)}(t)\|+(1-\mu_{\ell})\|x_{\ell}(t)-x_{r'(\ell)}(t)\|, k\in [i]\}\cr 
&\qquad\qquad=\|x_i(t)-x_{r(i)}(t)\|+(1-\mu_{\ell})\|x_{\ell}(t)-x_{r'(\ell)}(t)\|.
\end{align}
As a result we obtain, 
\begin{align}\label{eq:i>l}
&\sum_{i>\ell}\Big(\min_i(\{\|x_k(t)-x_{r(k)}(t)\|, k\in[n]\})-\min_i(\{\|x_k(t+1)\!-\!x_{r'(k)}(t+1)\|, i\in[n]\})\Big)2^{n-i}\cr
&\qquad=\sum_{i> \ell}\Big(\|x_i(t)-x_{r(i)}(t)\|-\min_i(\{\|x_k(t\!+\!1)\!-\!x_{r'(k)}(t+1)\|, k\in[n]\})\Big)2^{n-i}\cr 
&\qquad\ge-\sum_{i>\ell}(1-\mu_{\ell})\|x_{\ell}(t)-x_{r'(\ell)}(t)\|2^{n-i}\cr 
&\qquad=-(1-\mu_{\ell})\|x_{\ell}(t)-x_{r'(\ell)}(t)\|(2^{n-\ell}-1),
\end{align}
where the inequality is due to \eqref{eq:ai-bi-negative}. Finally, summing \eqref{eq:i>l} and \eqref{eq:i<l}, and using the definition of $\hat{V}(\cdot)$, we obtain $\hat{V}(\boldsymbol{x}(t))-\hat{V}(\boldsymbol{x}(t+1))\ge (1-\mu_{\ell})\|x_{\ell}(t)\!-\!x_{r'(\ell)}(t)\|$. As agent $\ell$ is selected uniformly at random, this implies that as long as $\boldsymbol{x}(t)$ is not an $\epsilon$-equilibrium, the expected decrease of $\hat{V}(\cdot)$ at time $t$ is at least, 
\begin{align}\label{eq:drift_asynchronous}
\mathbb{E}[\hat{V}(\boldsymbol{x}(t))-\hat{V}(\boldsymbol{x}(t+1))]\ge \frac{1}{n}\sum_{\ell=1}^{n}(1-\mu_{\ell})\|x_{\ell}(t)\!-\!x_{r'(\ell)}(t)\|\ge \frac{(1-\mu_{\max})\epsilon}{n}.
\end{align}
Finally, we note that $\hat{V}(\cdot)$ is a nonnegative function such that $\hat{V}(\boldsymbol{x}(0))\leq 2^nD_0$. This in view of \eqref{eq:drift_asynchronous} shows that the expected number of steps before the asynchronous dynamics reach an $\epsilon$-equilibrium is bounded above by $\frac{2^nD_0}{\frac{(1-\mu_{\max})\epsilon}{n}}$.        
\end{proof}

\subsection{Synchronous Nearest Neighbor Dynamics}
In this part we consider the synchronous nearest neighbor dynamics whose formal definition is as follows: Consider a set of $[n]$ agents where the opinion of agent $i\in[n]$ at time $t=0,1,2,\ldots$ is given by $x_i(t)\in \mathbb{R}^d$. Given the current opinion profile $\boldsymbol{x}(t)$, in the next iteration \emph{every} agent $i\in[n]$ updates its opinion to $x_i(t+1)=\mu_i x_i(t)+(1-\mu_i)x_{r(i)}(t)$, where as before $r(i):=\arg\min_{j\in [n]\setminus\{i\}}\|x_i(t)-x_j(t)\|$ denotes the closest agent to $i$ with respect to the opinion profile $\boldsymbol{x}(t)$, and $\mu_i\in(0,1), i\in[n]$ are mixing constants. As before, we note that the communication network of the synchronous dynamics is asymmetric whose evolution depends on the opinion profiles.

 In the following we show that if all the agents have the same mixing parameter $\mu_i=\mu, \forall i\in [n]$, then the \emph{synchronous} nearest neighbor dynamics reach an $\epsilon$-equilibrium very fast. 
\begin{theorem}
The synchronous nearest neighbor dynamics with $\mu_i:=\mu\in(0,1)$, $\forall i\in[n]$ converge to an $\epsilon$-equilibrium point after at most $t_\epsilon\leq n(\frac{2D_0}{\epsilon}+\log_{|1-2\mu|} (\frac{\epsilon}{2D_0}))$ iterations, where $D_0=\max_{i,j} \|x_i(0)-x_{j}(0)\|$.
\end{theorem}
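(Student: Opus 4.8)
The plan is to reduce everything to a single monotonicity fact: under the synchronous update \emph{no} agent's distance to its current nearest neighbor can ever increase. Writing $g_i(t):=\|x_i(t)-x_{r(i)}(t)\|$ for the nearest-neighbor distance of agent $i$ at time $t$, I would first show $g_i(t+1)\le g_i(t)$ for every $i$. Since $x_i(t+1)=\mu x_i(t)+(1-\mu)x_{r(i)}(t)$ and $x_{r(i)}(t+1)=\mu x_{r(i)}(t)+(1-\mu)x_{r(r(i))}(t)$, the triangle inequality gives $\|x_i(t+1)-x_{r(i)}(t+1)\|\le \mu g_i(t)+(1-\mu)g_{r(i)}(t)$, and because $r(r(i))$ is the nearest neighbor of $r(i)$ we have $g_{r(i)}(t)\le\|x_{r(i)}(t)-x_i(t)\|=g_i(t)$; hence the bound is at most $g_i(t)$, and $g_i(t+1)\le\|x_i(t+1)-x_{r(i)}(t+1)\|\le g_i(t)$. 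Because order statistics are monotone under coordinatewise domination, this immediately reproves that $\mathrm{sort}(g_i)$ is a (lexicographic, in fact coordinatewise) Lyapunov function as in Theorem \ref{thm:asynchronous-nearest neighbor}, giving Lyapunov stability. I would also record that each update is a convex combination of current positions, so the configuration stays inside the initial convex hull, the diameter is non-increasing, $D_t\le D_0$, and every $g_i(0)\le D_0$.

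The decisive consequence is that the set of \emph{satisfied} agents $\{i:g_i(t)\le\epsilon\}$ can only grow: once $g_i$ drops to $\epsilon$ it stays there. Since an $\epsilon$-equilibrium is exactly the state where every agent is satisfied, it suffices to bound how long it takes for the unsatisfied set $U(t):=\{i:g_i(t)>\epsilon\}$ to become empty, and since $U(t)$ is nested-decreasing this follows from a per-phase ``progress'' bound. To produce progress I would look at the agent $\ell$ minimizing $g_\ell$ over $U(t)$ and split into two regimes. If $r(\ell)$ is itself unsatisfied, then $g_{r(\ell)}\ge g_\ell$ combined with $g_{r(\ell)}\le\|x_{r(\ell)}(t)-x_\ell(t)\|=g_\ell$ forces equality and mutuality $r(r(\ell))=\ell$; for such a mutual pair one computes $x_\ell(t+1)-x_{r(\ell)}(t+1)=(2\mu-1)\bigl(x_\ell(t)-x_{r(\ell)}(t)\bigr)$, so the pair distance contracts by $|1-2\mu|<1$. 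If instead $r(\ell)$ is already satisfied, then $g_{r(\ell)}\le\epsilon<g_\ell$ and $\ell$ ``chases'' a satisfied cluster, obeying $g_\ell(t+1)-\epsilon\le\mu\bigl(g_\ell(t)-\epsilon\bigr)$.

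Assembling the rate, I would argue that from any configuration with $U\ne\emptyset$, within $T:=\frac{2D_0}{\epsilon}+\log_{|1-2\mu|}\frac{\epsilon}{2D_0}$ steps at least one new agent becomes satisfied, so $|U|$ strictly decreases. The geometric factor $|1-2\mu|$ governs the mutual-pair regime: the globally closest pair is always a mutual nearest-neighbor pair, so the minimum pairwise distance obeys $d_{\min}(t+1)\le|1-2\mu|\,d_{\min}(t)$ and falls below $\epsilon$ within $\log_{|1-2\mu|}(\epsilon/2D_0)$ steps, contributing the logarithmic term; the linear term $2D_0/\epsilon$ bounds the chasing regime, where an unsatisfied agent advances toward its satisfied target by $(1-\mu)g_\ell>(1-\mu)\epsilon$ along a path confined to a hull of diameter $\le D_0$. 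Since $|U(0)|\le n$ and satisfaction is permanent, iterating the progress bound over at most $n$ phases yields $t_\epsilon\le nT$, the claimed estimate.

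The main obstacle is making the progress step rigorous under \emph{switching}: across the $T$ steps allotted to a phase, the minimizer $\ell$, the membership of $U$, and the map $r(\cdot)$ can all change, so neither the $|1-2\mu|$ contraction nor the $\mu$-chasing recursion holds verbatim for $T$ consecutive steps. The clean way around this is to phrase both estimates through the monotone surrogates $d_{\min}(t)$ and $\min_{i\in U(t)}g_i(t)$, which decrease regardless of relabeling, and to handle the boundary behavior of the chasing recursion (whose fixed point is exactly $\epsilon$) by exploiting that the chased cluster's own nearest-neighbor distances shrink to zero, together with the factor-two slack built into $2D_0$ and $\epsilon/2$. Verifying that these surrogates force a new agent into the satisfied set within $T$ steps, and tracking constants so that the linear and geometric contributions sum to exactly $T$, is where the real work lies.
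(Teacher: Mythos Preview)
Your monotonicity step $g_i(t{+}1)\le g_i(t)$ is exactly right and matches the paper's computation. From there, however, your route to the quantitative bound diverges from the paper's, and the gaps you yourself flag are genuine and not easily closed along the lines you propose.

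The paper does \emph{not} argue by phases or by counting satisfied agents. It uses the scalar Lyapunov $\hat V(\boldsymbol{x})=\sum_i g_i(\boldsymbol{x})$ and extracts a one-step drift via a telescoping sum along the directed $r$-path $P_t$ inside the component $C_t$ carrying the longest edge $D_t=\max_i g_i(t)$:
\[
\hat V(\boldsymbol{x}(t))-\hat V(\boldsymbol{x}(t{+}1))\ \ge\ (1-\mu)\sum_{i\in P_t}\bigl(g_i(t)-g_{r(i)}(t)\bigr)\ =\ (1-\mu)\bigl(D_t-d_t\bigr),
\]
where $d_t$ is the shortest edge of $C_t$ (its mutual pair). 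Time steps are then split according to whether $d_t\ge\epsilon/2$ or $d_t<\epsilon/2$. In the first group the shortest-edge endpoints contract by $|1-2\mu|$, and because each $g_i$ is monotone, any fixed agent can serve as such an endpoint at most $\log_{|1-2\mu|}(\epsilon/2D_0)$ times before its $g_i$ drops below $\epsilon/2$; summing over the $n$ agents caps this group at $n\log_{|1-2\mu|}(\epsilon/2D_0)$ steps. In the second group $D_t\ge\epsilon$ and $d_t<\epsilon/2$, so the drift of $\hat V$ is bounded below by a fixed constant, and $\hat V(\boldsymbol{x}(0))\le nD_0$ limits the number of such steps to $2nD_0/\epsilon$. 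Adding the two counts gives the claimed bound.

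Your phase argument hits two concrete obstructions. First, the chasing recursion $g_\ell(t{+}1)-\epsilon\le\mu(g_\ell(t)-\epsilon)$ has fixed point exactly $\epsilon$, so it never forces $g_\ell\le\epsilon$ in finite time; appealing to shrinkage of the chased cluster helps only if you can feed that shrinkage back into $\ell$'s recursion, which is delicate because $r(\ell)$ can hop among agents of the satisfied set. Second, your justification of the linear term---``advances by $(1-\mu)\epsilon$ along a path confined to a hull of diameter $D_0$''---does not bound time: a trajectory confined to a bounded convex set can have arbitrarily large total length, so a step-size lower bound alone is insufficient, and in any case this would introduce a $(1-\mu)^{-1}$ factor absent from the stated estimate. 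The paper sidesteps both issues because the telescoping drift $D_t-d_t$ is a \emph{global} quantity insensitive to which agent is currently ``$\ell$'', and the budget $\hat V(\boldsymbol{x}(0))$ directly caps the number of large-drift steps without any path-length reasoning. If you want to rescue the phase idea, the missing ingredient is precisely a Lyapunov-type quantity whose per-step decrease is bounded below \emph{uniformly} during each phase; $\sum_i g_i$ together with the path-telescoping identity is the natural choice.
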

\begin{proof}
Consider BCD method for \emph{component-wise} minimizing of the function: 
\begin{align}\nonumber
\min \Big\{f(\boldsymbol{y},\boldsymbol{\lambda})\!:&=\!\Big(\sum_{j=1}^{n}\lambda_{1j}\|y_1-y_j\|,\ldots,\sum_{j=1}^{n}\lambda_{nj}\|y_n-y_j\|\Big)|\ \boldsymbol{\lambda}\in \Lambda, \ \boldsymbol{y}\in \mathbb{R}^{n\times d}\Big\},
\end{align}
where $\Lambda=\{(\lambda_{ij})\in [0,1]^{n^2}: \sum_{j=1}^{n}\lambda_{ij}=1,\ \lambda_{ii}=0, \forall i\}$. As before, given a fixed opinion state $\boldsymbol{y}=\boldsymbol{x}(t)$, minimizing $f(\cdot)$ over $\boldsymbol{\lambda}\in \Lambda$ gives us $(\boldsymbol{\lambda}_t)_{ij}=1$ if $j=r(i)$, and $(\boldsymbol{\lambda}_t)_{ij}=0$ otherwise, where $r(\cdot)$ is defined with respect to the opinion profile $\boldsymbol{x}(t)$. Therefore, for a fixed state $\boldsymbol{y}=\boldsymbol{x}(t)$, the communication network is precisely captured by minimizing $f(\cdot)$ over $\boldsymbol{\lambda}\in \Lambda$. Now by fixing the network to this minimizer $\boldsymbol{\lambda}_t$, we get $f(\boldsymbol{x}(t),\boldsymbol{\lambda}_t)=(\|x_1(t)-x_{r(1)}(t)\|,\ldots,\|x_n(t)-x_{r(n)}(t)\|)$, which we next evaluate the effect of opinion updates on it.

Let $\boldsymbol{x}(t+1)$ be the updated opinion vector obtained from $\boldsymbol{x}(t)$, given the fixed network $\boldsymbol{\lambda}_t$. By definition of synchronous dynamics $x_i(t+1)=\mu x_i(t)+(1-\mu)x_{r(i)}(t), \forall i$,
\begin{align}\nonumber
\|x_i(t\!+\!1)\!-\!x_{r(i)}(t\!+\!1)\|&=\|\mu x_i(t)\!+\!(1\!-\!\mu)x_{r(i)}(t)\!-\!\big(\mu x_{r(i)}(t)\!+\!(1\!-\!\mu)x_{r(r(i))}(t)\big)\|\cr 
&=\|\mu (x_i(t)-x_{r(i)}(t))+(1-\mu)\big(x_{r(i)}(t)-x_{r(r(i))}(t)\big)\|\cr 
&\leq \mu \|x_i(t)-x_{r(i)}(t)\|+ (1-\mu) \|x_{r(i)}(t)-x_{r(r(i))}(t)\|\cr 
&\leq \mu \|x_i(t)-x_{r(i)}(t)\|+(1-\mu)\|x_{i}(t)-x_{r(i)}(t)\|\cr 
&=\|x_i(t)-x_{r(i)}(t)\|, 
\end{align}  
where the first inequality is due to the triangle inequality, and the last equality holds since by definition of $r(\cdot)$ we have $\|x_{r(i)}(t)-x_{r(r(i))}(t)\|\leq \|x_i(t)-x_{r(i)}(t)\|$. Therefore,   
\begin{align}\nonumber
f(\boldsymbol{x}(t\!+\!1),\boldsymbol{\lambda}_t)&\!=\!(\|x_i(t\!+\!1)\!-\!x_{r(i)}(t\!+\!1)\|,i\in[n])\leq (\|x_i(t)\!-\!x_{r(i)}(t)\|,i\in[n])\!=\!f(\boldsymbol{x}(t),\boldsymbol{\lambda}_t). 
\end{align} 
where the above inequality is component-wise. As a result, we have 
\begin{align}\nonumber
f(\boldsymbol{x}(t\!+\!1),\boldsymbol{\lambda}_{t+1})&=\min _{\boldsymbol{\lambda}\in \Lambda}f(\boldsymbol{x}(t\!+\!1),\boldsymbol{\lambda})\leq f(\boldsymbol{x}(t\!+\!1),\boldsymbol{\lambda}_t)\leq f(\boldsymbol{x}(t),\boldsymbol{\lambda}_t),
\end{align}
which shows that $V(\boldsymbol{y}):=\min_{\boldsymbol{\lambda}\in \Lambda}f(\boldsymbol{\lambda},\boldsymbol{y})=(\|y_i(t)-y_{r(i)}(t)\|,i\in[n])$ serves as a Lyapunov function for the synchronous nearest neighbor dynamics. 

To evaluate the convergence speed of the dynamics to an $\epsilon$-equilibrium point, let us consider the scalar function $\hat{V}(\boldsymbol{x})=\sum_{i=1}^{n}\|x_i\!-\!x_{r(i)}\|$. Let $C_t$ be a connected component of the communication graph at time $t$ which contains the longest edge $D_t:=\max_i \|x_i(t)-x_{r(i)}(t)\|$, and define $d_t$ to be the length of the shortest edge in $C_t$. A similar analysis as above shows that  
\begin{align}\label{eq:synchronous-drift}
\hat{V}(x(t))-\hat{V}(x(t&+1))\ge\sum_{i=1}^{n}\Big(\|x_i(t)-x_{r(i)}(t)\|-\|x_i(t+1)-x_{r(i)}(t+1)\|\Big)\cr 
&\ge\sum_{i\in C_t}\Big(\|x_i(t)-x_{r(i)}(t)\|-\|x_i(t+1)-x_{r(i)}(t+1)\|\Big)\cr
&\ge(1-\mu)\sum_{i\in C_t}\Big(\|x_i(t)-x_{r(i)}(t)\|-\|x_{r(i)}(t)-x_{r(r(i))}(t)\|\Big) \cr  
&\ge (1-\mu)\sum_{i\in P_t}\Big(\|x_i(t)-x_{r(i)}(t)\|-\|x_{r(i)}(t)-x_{r(r(i))}(t)\|\Big)\cr 
&=(1-\mu) (D_t-d_t), 
\end{align} 
where as before $r(\cdot)$ is associated with state $x(t)$, and $P_t$ is the unique directed path connecting the longest edge to the shortest edge in $C_t$. Note that the last equality in \eqref{eq:synchronous-drift} holds due to the telescopic sum over the nodes of $P_t$. 

On the other hand, by definition of $t_{\epsilon}$ we know that $D_t\ge \epsilon, \forall t<t_{\epsilon}$. We claim that for \emph{at most} $n\log_{|1-2\mu|} (\frac{\epsilon}{2D_0})$ time instances $t\in\{0,\ldots,t_{\epsilon}\}$ we can have $d_t\ge \frac{\epsilon}{2}$. This is because whenever an agent $i$ is an endpoint of the shortest edge in $C_t$, we get $r(r(i))=i$, and thus:
\begin{align}\nonumber
\|x_i(t+1)-x_{r_{t+1}(i)}(t+1)\|&\leq \|x_i(t+1)-x_{r(i)}(t+1)\|\cr 
&=\|\mu x_i(t)+(1-\mu)x_{r(i)}(t)-(\mu x_{r(i)}(t)-(1-\mu)x_i(t)) \|\cr 
&=|1-2\mu|\|x_i(t)-x_{r(i)}(t)\|. 
\end{align}
As a result the distance between agent $i$ and its nearest neighbor in the next time step will reduce by a factor of $|1-2\mu|\in(0,1)$, and as we saw earlier, for each $i\in[n]$ this distance can only decrease in the future iterations of the dynamics. This implies that agent $i\in[n]$ can be incident to the shortest edge in $C_t$ for at most $\log_{|1-2\mu|} (\frac{\epsilon}{2D_0})$ time instances before its distance to its closest neighbor shrinks below $\frac{\epsilon}{2}$. As there are in total $n$ agents, the claim follows. Therefore, for at least $t_\epsilon-n\log_{|1-2\mu|} (\frac{\epsilon}{2D_0})$ time instances we have $d_t< \frac{\epsilon}{2}$ and $D_t\ge \epsilon$. By \eqref{eq:synchronous-drift}, for such instances the function $\hat{V}(\cdot)$ must decrease by at least $D_t-d_t\ge \frac{\epsilon}{2}$. Since $\hat{V}(\cdot)$ is always nonnegative and $\hat{V}(\boldsymbol{x}(0))\leq nD_0$, we must have $(t_\epsilon-n\log_{|1-2\mu|} (\frac{\epsilon}{2D_0}))\frac{\epsilon}{2}<nD_0$, as desired.               
\end{proof}

\section{Applications to Game Theory}\label{sec:game}
In this section we provide a game-theoretic application of our successive optimization framework and show how it can be leveraged to establish convergence of natural best response dynamics (or its other variants) towards equilibrium points of the underlying game. For this purpose let us consider a \emph{Stackelberg} game with one leader (network designer) and $n$ followers (players). At each stage the leader decides about the network structure and the followers best respond to the leader's action. The leader's action is to choose a matrix $\boldsymbol{\lambda}:=(\lambda_{ij})\in \Lambda$, where $\lambda_{ij}$ indicates the amount by which the cost of player $i$ is influenced by the action of player $j$. After that each follower $i\in [n]$ incurs a cost of 
\begin{align}\nonumber
c_i(x_i,\boldsymbol{x}_{-i},\boldsymbol{\lambda})=\sum_{j=1}^{n}\lambda_{ij}\phi_i(x_i,x_j),
\end{align}
where $x_i\in X_i$ denotes the action taken by player $i$, $\boldsymbol{x}_{-i}$ denotes the actions of all players other than $i$, and $\phi_i(x_i,x_j)$ is a player-specific function which captures the coupling cost incurred by player $i$ due to the action of player $j$. For each player $i\in [n]$, we assume that $\phi_i(x_i,x_j)$ is a continuous and strictly convex function of its own action $x_i$, given fixed actions of all others (including the leader). Finally, in this game we assume that the leader's objective is to minimize the social cost defined by $c(\boldsymbol{x},\boldsymbol{\lambda})=\sum_{i=1}^{n}c_i (\boldsymbol{x},\boldsymbol{\lambda})$. This fully determines a Stackelberg game between the leader and the followers where the leader first determines the network structure and the followers best respond to it by playing a noncooperative game among themselves.

\begin{definition}
Let $X$ be a compact set, and consider a continuous function $f(x,y):X\times X\to \mathbb{R}$, where for each $x\in X$ the mapping $Z(x):=\arg\min_{y\in X}f(x,y)$ is singular value. $f(\cdot)$ is called min-symmetric, if $f(Z(x),Z(x))\leq f(x,Z(x)), \forall x\in X$.  
\end{definition}

\begin{remark}\label{rem:min-symmetric}
Quadratic functions of the form $f(\boldsymbol{x},\boldsymbol{y}):=\boldsymbol{x}^TQ_1\boldsymbol{x}+\boldsymbol{y}^TQ_2\boldsymbol{y}-2\boldsymbol{y}^TR\boldsymbol{x}$, where $Q_1, Q_2$ are positive definite matrices, and $R=R^T$ is a symmetric matrix, are a subclass of min-symmetric functions (see, e.g., \cite[Lemma 1]{roozbehani2008lyapunov}).
\end{remark}

\begin{theorem}\label{thm:stackelberg}
Let $\Lambda\subseteq [0,1]^{n\times n}$ be a polytope, and $X_i, i\in[n]$ be compact sets. Moreover, assume that for every $\boldsymbol{\lambda}\in \Lambda$, the function $\rho(\boldsymbol{x},\boldsymbol{y},\boldsymbol{\lambda}):=\sum_{i=1}^{n}c_i(y_i,\boldsymbol{x}_{-i},\boldsymbol{\lambda})$ is min-symmetric. Then every limit point of the best response dynamics generated by the leader and followers will be a Stackelberg equilibrium.    
\end{theorem}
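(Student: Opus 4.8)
The plan is to recognize the leader--follower best response dynamics as an instance of the block coordinate descent scheme of Proposition~\ref{prop:framework} applied to the social cost itself. Writing $Z_{\boldsymbol{\lambda}}(\boldsymbol{x}):=\arg\min_{\boldsymbol{y}}\rho(\boldsymbol{x},\boldsymbol{y},\boldsymbol{\lambda})$, I first note that since $\rho(\boldsymbol{x},\boldsymbol{y},\boldsymbol{\lambda})=\sum_{i}\sum_{j}\lambda_{ij}\phi_i(y_i,x_j)$ separates over the blocks $y_i$, minimizing $\rho(\boldsymbol{x},\cdot,\boldsymbol{\lambda})$ is exactly the simultaneous best response of the followers. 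Thus the follower update $\boldsymbol{x}(t+1)=Z_{\boldsymbol{\lambda}(t)}(\boldsymbol{x}(t))$ plays the role of $g_1$, while the leader update $\boldsymbol{\lambda}(t+1)\in\arg\min_{\boldsymbol{\lambda}\in\Lambda}c(\boldsymbol{x}(t+1),\boldsymbol{\lambda})$ plays the role of $g_2$. The structural identity driving everything is that the social cost is the diagonal of $\rho$, namely $c(\boldsymbol{x},\boldsymbol{\lambda})=\rho(\boldsymbol{x},\boldsymbol{x},\boldsymbol{\lambda})$.

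With this identification I take $\Phi(\boldsymbol{y},\boldsymbol{\lambda}):=c(\boldsymbol{y},\boldsymbol{\lambda})$ and $f_1\equiv 0$, so that $f=\Phi=c$. Condition~\eqref{eq:network} holds trivially, since $g_2$ is by definition the minimizer of $c(\boldsymbol{x}(t+1),\cdot)$. The substance lies in verifying~\eqref{eq:state}, i.e.\ that on a fixed network the followers' best response does not raise the social cost. Setting $\boldsymbol{x}^+:=Z_{\boldsymbol{\lambda}}(\boldsymbol{x})$, I would chain
\[
c(\boldsymbol{x}^+,\boldsymbol{\lambda})=\rho(\boldsymbol{x}^+,\boldsymbol{x}^+,\boldsymbol{\lambda})\le\rho(\boldsymbol{x},\boldsymbol{x}^+,\boldsymbol{\lambda})\le\rho(\boldsymbol{x},\boldsymbol{x},\boldsymbol{\lambda})=c(\boldsymbol{x},\boldsymbol{\lambda}),
\]
where the first inequality is precisely the min-symmetry of $\rho(\cdot,\cdot,\boldsymbol{\lambda})$ and the second holds because $\boldsymbol{x}^+$ minimizes $\rho(\boldsymbol{x},\cdot,\boldsymbol{\lambda})$. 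Proposition~\ref{prop:framework} then certifies that $c(\boldsymbol{x}(t),\boldsymbol{\lambda}(t))$ is nonincreasing; as $\prod_i X_i$ and the polytope $\Lambda$ are compact and $c$ is continuous, this sequence is bounded below and converges, which already yields Lyapunov stability.

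The remaining, more delicate task is to show each limit point is a Stackelberg (fixed) point. Since the decrease telescopes and the cost converges, every one-step decrease tends to zero, and splitting it across the two half-steps makes both inequalities in $c(\boldsymbol{x}(t),\boldsymbol{\lambda}(t))\ge c(\boldsymbol{x}(t+1),\boldsymbol{\lambda}(t))\ge c(\boldsymbol{x}(t+1),\boldsymbol{\lambda}(t+1))$ asymptotically tight. Passing to a subsequence converging to a limit point $(\boldsymbol{x}^*,\boldsymbol{\lambda}^*)$, I would use that strict convexity of each $\phi_i$ makes $Z_{\boldsymbol{\lambda}}(\boldsymbol{x})$ single-valued while Berge's maximum theorem makes it jointly continuous, so $\boldsymbol{x}(t+1)\to Z_{\boldsymbol{\lambda}^*}(\boldsymbol{x}^*)$ along the subsequence. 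Feeding this into the now-tight state half-step and reusing the two-inequality chain forces $\rho(\boldsymbol{x}^*,Z_{\boldsymbol{\lambda}^*}(\boldsymbol{x}^*),\boldsymbol{\lambda}^*)=\rho(\boldsymbol{x}^*,\boldsymbol{x}^*,\boldsymbol{\lambda}^*)$; uniqueness of the minimizer then gives $\boldsymbol{x}^*=Z_{\boldsymbol{\lambda}^*}(\boldsymbol{x}^*)$, so the followers rest at a Nash equilibrium of the game induced by $\boldsymbol{\lambda}^*$.

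I expect the main obstacle to be the leader's optimality. Because $c$ is linear in $\boldsymbol{\lambda}$ over the polytope $\Lambda$, the minimizer map $g_2$ is generally set-valued and discontinuous, so I cannot simply pass to the limit in $\boldsymbol{\lambda}(t+1)$. Instead I would work with the value function $V(\boldsymbol{x}):=\min_{\boldsymbol{\lambda}\in\Lambda}c(\boldsymbol{x},\boldsymbol{\lambda})$, which is continuous by Berge's theorem. Tightness of the leader half-step reads $c(\boldsymbol{x}(t+1),\boldsymbol{\lambda}(t))-V(\boldsymbol{x}(t+1))\to 0$; taking the limit along the subsequence with $\boldsymbol{x}(t+1)\to\boldsymbol{x}^*$ and $\boldsymbol{\lambda}(t)\to\boldsymbol{\lambda}^*$ gives $c(\boldsymbol{x}^*,\boldsymbol{\lambda}^*)=V(\boldsymbol{x}^*)$, hence $\boldsymbol{\lambda}^*\in\arg\min_{\boldsymbol{\lambda}}c(\boldsymbol{x}^*,\boldsymbol{\lambda})$. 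Combined with the Nash condition on $\boldsymbol{x}^*$, this makes $(\boldsymbol{x}^*,\boldsymbol{\lambda}^*)$ a fixed point of the coupled dynamics, i.e.\ a Stackelberg equilibrium, completing the argument.
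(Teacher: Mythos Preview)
Your argument is correct and shares the same opening with the paper: both identify $c(\boldsymbol{x},\boldsymbol{\lambda})=\rho(\boldsymbol{x},\boldsymbol{x},\boldsymbol{\lambda})$ and use the min-symmetric chain
\[
c(\boldsymbol{x}^+,\boldsymbol{\lambda})\le\rho(\boldsymbol{x},\boldsymbol{x}^+,\boldsymbol{\lambda})\le c(\boldsymbol{x},\boldsymbol{\lambda})
\]
to make $c(\boldsymbol{x}(t),\boldsymbol{\lambda}(t))$ nonincreasing. The divergence is in how the limit point is shown to be an equilibrium. The paper exploits the polytope hypothesis directly: since $c(\boldsymbol{x},\cdot)$ is linear, the leader's choices lie among the finitely many extreme points of $\Lambda$, so one can extract a subsequence with $\boldsymbol{\lambda}_{t_k}\equiv\boldsymbol{\lambda}^*$ exactly; leader optimality is then immediate (each $\boldsymbol{\lambda}_{t_k}$ already minimizes $c(\boldsymbol{x}(t_k),\cdot)$), and follower optimality is obtained by a contradiction argument using a fixed positive gap $\delta$ and the monotonicity of $c$. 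You instead work entirely through continuity: Berge's theorem gives joint continuity of $Z_{\boldsymbol{\lambda}}(\boldsymbol{x})$ and of the value function $V(\boldsymbol{x})=\min_{\boldsymbol{\lambda}}c(\boldsymbol{x},\boldsymbol{\lambda})$, and the vanishing one-step decrements force both half-steps to be tight in the limit, yielding $\boldsymbol{x}^*=Z_{\boldsymbol{\lambda}^*}(\boldsymbol{x}^*)$ and $c(\boldsymbol{x}^*,\boldsymbol{\lambda}^*)=V(\boldsymbol{x}^*)$ directly. Your route never uses the finiteness of extreme points, so it actually proves the theorem for any compact convex $\Lambda$, not only polytopes; the paper's route is more elementary in that it avoids invoking Berge but leans essentially on the polytope assumption.
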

\begin{proof}
Given a generic time instance $t$, let us first \emph{fix} leader's action to $\boldsymbol{\lambda}_t$ and analyze the game among followers. Note that for fixed $\boldsymbol{\lambda}_t$, the cost function of each player $i$ is continuous and strictly convex function of its own action $x_i$. Therefore, the unique best-response profile of all the followers at time $t+1$ is given by $\boldsymbol{x}(t+1)=\arg\min_{\boldsymbol{y}\in X} \rho(\boldsymbol{x}(t),\boldsymbol{y},\boldsymbol{\lambda}_t)$, where $X\!:=X_1\!\times\cdots\times X_n$ is a compact set. By min-symmetric property of $\rho(\boldsymbol{x},\boldsymbol{y},\boldsymbol{\lambda}_t)$, we can write,  
\begin{align}\label{eq:stakelberg}
c(\boldsymbol{x}(t+1),\boldsymbol{\lambda}_t)&=\rho(\boldsymbol{x}(t+1),\boldsymbol{x}(t+1),\boldsymbol{\lambda}_t)\cr 
&\leq \rho(\boldsymbol{x}(t),\boldsymbol{x}(t+1),\boldsymbol{\lambda}_t)\cr 
&=\min_{\boldsymbol{y}\in X} \rho(\boldsymbol{x}(t),\boldsymbol{y},\boldsymbol{\lambda}_t)\cr 
&\leq \rho(\boldsymbol{x}(t),\boldsymbol{x}(t),\boldsymbol{\lambda}_t)=c(\boldsymbol{x}(t),\boldsymbol{\lambda}_t). 
\end{align}
Finally, the leader's best action to the followers' actions $\boldsymbol{x}(t+1)$ is given by $\boldsymbol{\lambda}_{t+1}=\arg\min_{\boldsymbol{\lambda}\in\Lambda}c(\boldsymbol{x}(t+1),\boldsymbol{\lambda})$, which in view of \eqref{eq:stakelberg} shows that for any time $t$ we have, 
\begin{align}\label{eq:decresae-c}
c(\boldsymbol{x}(t+1),\boldsymbol{\lambda}_{t+1})\leq c(\boldsymbol{x}(t),\boldsymbol{\lambda}_t).
\end{align}

Next we note that $c(\boldsymbol{x},\boldsymbol{\lambda})$ is a linear function of $\boldsymbol{\lambda}$ which at each iteration is minimized over the polytope $\Lambda$. Therefore, the set of leader's best actions $\{\lambda_t\}_{t=0}^{\infty}$ is a finite set comprised of at most all the extreme points of $\Lambda$. Now let $(\boldsymbol{x}^*,\boldsymbol{\lambda}^*)$ be an arbitrary limit point of the sequence $\{(\boldsymbol{x}(t),\lambda_t)\}_{t=0}^{\infty}$ (which does exist since this sequence belongs to the compact set $X\times \Lambda$). As $\{\lambda_t\}_{t=0}^{\infty}$ is a finite set, this implies that there exists a subsequence $\{\boldsymbol{x}(t_k)\}_{k=0}^{\infty}$ which converges to $\boldsymbol{x}^*$ under the fixed network topology $\boldsymbol{\lambda}_{t_k}=\boldsymbol{\lambda}^*, \forall k\ge 0$. To derive a contradiction, assume that $\boldsymbol{\lambda}^*$ is not the best response of the leader to the followers' actions $\boldsymbol{x}^*$. This means that there exists a $\tilde{\boldsymbol{\lambda}}\neq \boldsymbol{\lambda}^*$ such that $c(\boldsymbol{x}^*,\tilde{\boldsymbol{\lambda}})<c(\boldsymbol{x}^*,\boldsymbol{\lambda}^*)$. By continuity of $c(\boldsymbol{x},\boldsymbol{\lambda})$ we have
\begin{align}\nonumber
\lim_{k\to \infty}c(\boldsymbol{x}(t_k),\tilde{\boldsymbol{\lambda}})=c(\boldsymbol{x}^*,\tilde{\boldsymbol{\lambda}})<c(\boldsymbol{x}^*,\boldsymbol{\lambda}^*)=\lim_{k\to \infty} c(\boldsymbol{x}(t_k),\boldsymbol{\lambda}_{t_k}).
\end{align} 
This means that there exists a sufficiently large integer $r>0$ such that $c(\boldsymbol{x}(t_r),\tilde{\boldsymbol{\lambda}})<c(\boldsymbol{x}(t_r),\boldsymbol{\lambda}_{t_r})$. But we already know that $c(\boldsymbol{x}(t_r),\boldsymbol{\lambda}_{t_r})=\min_{\boldsymbol{\lambda}\in\Lambda}c(\boldsymbol{x}(t_r),\boldsymbol{\lambda})\leq c(\boldsymbol{x}_{t_k},\tilde{\boldsymbol{\lambda}})$, which is in contrast with the former inequality. This shows that $\boldsymbol{\lambda}^*$ is the best response of the leader to followers' actions $\boldsymbol{x}^*$.

Similarly, given leader's action $\boldsymbol{\lambda}^*$, let us assume that $x^*$ is not a Nash equilibrium among the followers. Define $Z(\boldsymbol{x}):=\arg\min_{\boldsymbol{y}\in X}\rho(\boldsymbol{x},\boldsymbol{y},\boldsymbol{\lambda}^*)$ to be the best response function of the followers, given the fixed leader's action $\boldsymbol{\lambda}^*$. Note that $Z(x):X\to X$ is a continuous function due to uniform continuity of $\rho(\cdot,\cdot,\boldsymbol{\lambda}^*)$ over the compact set $X\times X$. Since $x^*$ is not a Nash equilibrium, this means that $\rho(\boldsymbol{x}^*,Z(\boldsymbol{x}^*),\boldsymbol{\lambda}^*)<\rho(\boldsymbol{x}^*,\boldsymbol{x}^*,\boldsymbol{\lambda}^*)$. As before, let $\{\boldsymbol{x}(t_k)\}_{k=0}^{\infty}$ be a subsequence converging to $\boldsymbol{x}^*$ under the fixed leader's action $\boldsymbol{\lambda}_{t_k}=\boldsymbol{\lambda}^*, \forall k\ge 0$. By continuity of $Z(x)$ and $\rho(\cdot,\cdot,\boldsymbol{\lambda}^*)$, we get 
\begin{align}\label{eq:final-delta}
\lim_{k\to \infty}\rho(\boldsymbol{x}(t_k),Z(\boldsymbol{x}(t_k)),\boldsymbol{\lambda}_{t_k})&=\lim_{k\to \infty}\rho(\boldsymbol{x}(t_k),Z(\boldsymbol{x}(t_k)),\boldsymbol{\lambda}^*)\cr 
&=\rho(\boldsymbol{x}^*,Z(\boldsymbol{x}^*),\boldsymbol{\lambda}^*)<\rho(\boldsymbol{x}^*,\boldsymbol{x}^*,\boldsymbol{\lambda}^*).
\end{align} 
Let $\delta:=\rho(\boldsymbol{x}^*,\boldsymbol{x}^*,\boldsymbol{\lambda}^*)-\rho(\boldsymbol{x}^*,Z(\boldsymbol{x}^*),\boldsymbol{\lambda}^*)>0$. From \eqref{eq:final-delta} and for sufficiently large integer $r$, we have $\rho(\boldsymbol{x}(t_r),Z(\boldsymbol{x}(t_r)),\boldsymbol{\lambda}_{t_r})<\rho(\boldsymbol{x}^*,\boldsymbol{x}^*,\boldsymbol{\lambda}^*)-\frac{\delta}{2}$. By definition we know that $\boldsymbol{x}(t_r+1)=Z(\boldsymbol{x}(t_r))$, thus by following the same argument as in \eqref{eq:stakelberg} and \eqref{eq:decresae-c}, 
\begin{align}\nonumber
c(\boldsymbol{x}(t_r+1),\boldsymbol{\lambda}_{t_r+1})&=\rho(\boldsymbol{x}(t_r+1),\boldsymbol{x}(t_r+1),\boldsymbol{\lambda}_{t_r+1})\cr 
&\leq \rho(\boldsymbol{x}(t_r+1),\boldsymbol{x}(t_r+1),\boldsymbol{\lambda}_{t_r})\cr 
&\leq  \rho(\boldsymbol{x}(t_r),Z(\boldsymbol{x}(t_r)),\boldsymbol{\lambda}_{t_r})\cr 
&<\rho(\boldsymbol{x}^*,\boldsymbol{x}^*,\boldsymbol{\lambda}^*)-\frac{\delta}{2}.
\end{align}
But from \eqref{eq:decresae-c} we know that $c(\boldsymbol{x}(t),\boldsymbol{\lambda}_{t})$ is a nonincreasing sequence so that for all $k> r$, we must have $c(\boldsymbol{x}(t_k),\boldsymbol{\lambda}_{t_k})\leq c(\boldsymbol{x}(t_r+1),\boldsymbol{\lambda}_{t_r+1})<\rho(\boldsymbol{x}^*,\boldsymbol{x}^*,\boldsymbol{\lambda}^*)-\frac{\delta}{2}$. Thus,
\begin{align}\nonumber
\rho(\boldsymbol{x}^*,\boldsymbol{x}^*,\boldsymbol{\lambda}^*)=c(\boldsymbol{x}^*,\boldsymbol{\lambda}^*)=\lim_{k\to \infty} c(\boldsymbol{x}(t_k),\boldsymbol{\lambda}_{t_k})\leq \rho(\boldsymbol{x}^*,\boldsymbol{x}^*,\boldsymbol{\lambda}^*)-\frac{\delta}{2}.
\end{align}
This contradiction shows that $\boldsymbol{x}^*$ is a Nash equilibrium among the followers, given the leader's action $\boldsymbol{\lambda}^*$. Therefore, the limit point $(\boldsymbol{x}^*,\boldsymbol{\lambda}^*)$ is a Stackelberg equilibrium.  
\end{proof}

\bigskip
\noindent
{\bf Example 1:} Consider a special case where $\phi_i(x,y)=(x-y)^2-\epsilon^2, \forall i\in[n]$, and the leader's best response polytope is given by the symmetric matrices with self-loops, i.e., $\Lambda:=\{\boldsymbol{\lambda}\in [0,1]^{n\times n}: \ \lambda_{ij}=\lambda_{ji},\forall i,j,\ \lambda_{ii}=1, \forall i\}$. In this case, we have 
\begin{align}\nonumber
\rho(\boldsymbol{x},\boldsymbol{y},\boldsymbol{\lambda})=\sum_{i=1}^{n}\sum_{j=1}^{n}\lambda_{ij}\big((y_i-x_j)^2-\epsilon^2\big)=\boldsymbol{y}^TQ\boldsymbol{y}+\boldsymbol{x}^TQ\boldsymbol{x}-2\boldsymbol{y}^T\boldsymbol{\lambda}\boldsymbol{y}-\epsilon^2(\boldsymbol{1}^T\boldsymbol{\lambda}\boldsymbol{1}),
\end{align} 
where again $\boldsymbol{\lambda}$ is a symmetric matrix, $Q=diag(\boldsymbol{\lambda}\boldsymbol{1})$ is positive definite, and hence by Remark \ref{rem:min-symmetric}, $\rho(\boldsymbol{x},\boldsymbol{y},\boldsymbol{\lambda})$ is min-symmetric. Therefore, the best response dynamics of the leader-followers converge to a Stackelberg equilibrium. Interestingly, one can easily check that the best response of followers are exactly governed by the homogeneous HK update rule which shows that the steady-state of the homogeneous HK model is simply a Stackelberg equilibrium of the above game. 

\smallskip
\noindent
{\bf Example 2:} Let us consider another special case where $\phi_i(x,y)=(x-y)^2, \forall i\in[n]$. Assume that at each instance the leader's objective is to keep the communication network among the followers (agents) connected, so that the followers can communicate with each other and eventually rendezvous at a consensus point. Perhaps one can think of agents as robots, and the leader as a bandwidth provider who aims to bring all the agents together with minimum bandwidth cost. In this case the leader's action set can be represent by the polytope 
\begin{align}\nonumber
\Lambda:=\{\boldsymbol{\lambda}\in [0,1]^{n\times n}: \ \sum_{i\in S, j\notin S}\lambda_{ij}\ge 1, \forall S\subset [n], \ \lambda_{ij}=\lambda_{ji} \forall i,j\}.
\end{align}
Note that the constraints $\sum_{i\in S, j\notin S}\lambda_{ij}\ge 1, \forall S\subset [n]$ guarantee that there is a communication path between every pair of agents. It is worth noting that the best response of the leader over the polytope $\Lambda$ can be solved efficiently in polynomial time despite the fact that $\Lambda$ contains exponentially many linear constraints. This can be done using Ellipsoid algorithm where the separation oracle can be implemented by solving at most $n^2$ network-flow problems. On the other hand, we have
\begin{align}\nonumber
\rho(\boldsymbol{x},\boldsymbol{y},\boldsymbol{\lambda})=\sum_{i=1}^{n}\sum_{j=1}^{n}\lambda_{ij}(y_i-x_j)^2=\boldsymbol{y}^TQ\boldsymbol{y}+\boldsymbol{x}^TQ\boldsymbol{x}-2\boldsymbol{y}^T\boldsymbol{\lambda}\boldsymbol{y},
\end{align} 
where $\boldsymbol{\lambda}$ is a symmetric matrix, and $Q=diag(\boldsymbol{\lambda}\boldsymbol{1})$ is a diagonal matrix with diagonal elements $Q_{ii}=\sum_j\lambda_{ij}\ge 1, \forall i\in[n]$. Again using Remark \ref{rem:min-symmetric} we conclude that $\rho(\boldsymbol{x},\boldsymbol{y},\boldsymbol{\lambda})$ is a min-symmetric function, which in view of Theorem \ref{thm:stackelberg} implies that the best response dynamics of the leader-followers converge to a Stackelberg equilibrium. Now let us find an explicit form for the best response dynamics. Given that at iteration $t$ the followers are at locations $\boldsymbol{x}(t)$, and the current min-cost network designed by the leader is $\boldsymbol{\lambda}_t$, at the next time instance the followers move to the average point of their neighbors given by $\boldsymbol{x}(t+1)=\arg\min_{\boldsymbol{y}}\rho(\boldsymbol{x}(t),\boldsymbol{y},\boldsymbol{\lambda}_t)=(Q_t^{-1}\boldsymbol{\lambda}_t)\boldsymbol{x}(t)$, or rewritten componentwise $x_i(t+1)=\sum_{j=1}^{n}\frac{(\boldsymbol{\lambda}_t)_{ij}}{(\boldsymbol{\lambda}_t\boldsymbol{1})_i}x_j(t), \forall i\in [n]$. Consequently, the leader looks at the current positions of the followers $\boldsymbol{x}(t+1)$, and assuming that making a communication link between two followers at positions $x_i(t+1)$ and $x_j(t+1)$ is proportional to their Euclidean distance $(x_i(t+1)-x_j(t+1))^2$, the leader decides what links to make (i.e., selects $\boldsymbol{\lambda}_{t+1}\in\Lambda$) so that the network remains connected while the network cost $c(\boldsymbol{x}(t+1),\boldsymbol{\lambda}_{t+1})$ is minimized. In other words, the leader solves the optimization problem $c(\boldsymbol{x}(t+1),\boldsymbol{\lambda}_{t+1})=\min_{\boldsymbol{\lambda}\in \Lambda} c(\boldsymbol{x}(t+1),\boldsymbol{\lambda})$. Clearly as the network remains connected at all the time instances while the followers take average over it, the convergent Stackelberg equilibrium must be a consensus point in which all the followers rendezvous at the same point.

\smallskip
As we close this section, we would like to note that in general, one can define different action polytopes $\Lambda$ for the leader and different influence functions $\phi_i(\cdot)$ for the followers. This allows us to recover various state-dependent network dynamics as the best response updates of leader-followers in a Stackelberg game, which by Theorem \ref{thm:stackelberg} are guaranteed to converge to a Stackelberg equilibrium, assuming the min-symmetric property. For instance, settings $\Lambda:=\{\boldsymbol{\lambda}\in [0,1]^{n\times n}: \ \sum_{j\neq i}\lambda_{ij}= 1, \forall i\in [n], \ \lambda_{ij}=\lambda_{ji} \forall i,j\}$ and quadratic influence functions $\phi_i(x,y)=(x-y)^2$ one can recover dynamics of a different variant of the nearest neighbor model considered in Section \ref{sec:nearest-neighbor}, and establish their convergence to an equilibrium point. In conclusion, game theory can provide alternative shortcuts for the analysis of dynamic networks of multi-agent systems. An interesting fact about this approach is that many of the existing results developed in the game-theory literature such as \cite{rosen1965existence} do not require stringent homogeneity or symmetric assumptions among the players. This provides ample room for analysis of heterogeneous multi-agent network dynamics by bridging them to game-theoretic settings. 

%%%%%%%%%%%%%%%%%%%%%%%%%%
\section{Conclusions and Further Discusssions}\label{sec:conclusion}
%%%%%%%%%%%%%%%%%%%%%%%%%%
In this paper, we have studied Lyapunov stability and convergence of multi-agent network systems with state-dependent switching dynamics. By incorporating the network structure into our framework as a new variable, we have shown that how the evolution of multi-agent network dynamics can be viewed as trajectories of successive optimization problems. Leveraging this framework, we were able to establish Lyapunov stability and convergence of several well-known models from social science, and extended our results to scenarios with asymmetric communication structures. In particular, we showed how these results can be viewed from a game-theoretic perspective, so that convergence of multi-agent network systems can be interpreted as selfish behavior of players in a well-designed Stackelberg game. 

We believe that this work opens many new directions for stability analysis of multi-agent network systems with rich state-dependent switching dynamics. In the following we have listed a few of such directions:

1) Often in successive optimization framework one adds smooth proximal functions to the original objective function (or approximates it with a smooth upper bound) in order to facilitates the minimization sub-problems \cite{attouch2010proximal}. Incorporating this idea into our framework will give us perturbed multi-agent network dynamics whose update rules are affected by the proximal term (while their convergence to a stationary point is still guaranteed). For instance, one can add a proximal term to the objective function $f(\boldsymbol{y},\boldsymbol{\lambda})$ of the homogeneous HK model in terms of Bregman distance of a strongly convex function and recover noisy versions of HK dynamics.

2) In Section \ref{sec:HK}, for a fixed network $\boldsymbol{\lambda}^*$ with associate Laplacian $\mathcal{L}^*$, we only used the quadratic Lyapunov function $\Phi(\boldsymbol{x},\boldsymbol{\lambda}^*)=\boldsymbol{x}^T\mathcal{L}^*\boldsymbol{x}$ into our BCD framework. While this Lyapunov function seems quite suitable when the network structure is symmetric (undirected), it cannot serve as a Lyapunov function for general asymmetric (directed) networks. However, for a general \emph{fixed} directed network with adjacency matrix $\boldsymbol{\lambda}^*$, one can choose an alternative Lyapunov function given by $\Phi(\boldsymbol{x},\boldsymbol{\pi}^*)=\boldsymbol{x}^T(diag(\boldsymbol{\pi}^*)-\boldsymbol{\pi}^*(\boldsymbol{\pi}^*)^T)\boldsymbol{x}=\sum_{i,j}\pi^*_i\pi^*_j(x_i-x_j)^2$, where $\boldsymbol{\pi}^*$ is the Perron-left eigenvector of the transition matrix of a simple random walk over $\boldsymbol{\lambda}^*$, i.e., the Peron-left eigenvector of $A=(diag(\boldsymbol{\lambda}^*\boldsymbol{1}))^{-1}\boldsymbol{\lambda}^*$ so that $(\boldsymbol{\pi}^*)^TA=(\boldsymbol{\pi}^*)^T$. In this case, one can replace the network variable matrix $\boldsymbol{\lambda}^*$ by its Perron-left eigenvector $\boldsymbol{\pi}^*$ in the BCD framework. Therefore, as far as minimizing $\boldsymbol{\pi}$ with respect to its constraint set matches the left-Perron eigenvector of the actual network dynamics, one can use Proposition \ref{prop:framework} with $\Phi(\boldsymbol{x},\boldsymbol{\pi}^*)=\boldsymbol{x}^T(diag(\boldsymbol{\pi}^*)-\boldsymbol{\pi}^*(\boldsymbol{\pi}^*)^T)\boldsymbol{x}$ to show Lyapunov stability of joint state-network dynamics under more general asymmetric (directed) environment. Therefore, it is interesting to study stability of state-dependent multi-agent networks under highly asymmetric setting using $\boldsymbol{\pi}$ variables, rather than simply the edge variables $\boldsymbol{\lambda}=(\lambda_{ij})$.      

3) In Section \ref{sec:game}, we used a Stackelberg game where the followers' cost functions are convex (i.e. the noncooperative game among followers is a convex game \cite{rosen1965existence}) and the leader's objective function is to minimize the social cost. These conditions can be generalized to other settings. For instance, one can consider a Stackelberg network game where the game among followers is a potential game \cite{monderer1996potential} whose potential function can serve as the Lyapunov function $\Phi(\boldsymbol{x},\boldsymbol{\lambda}^*)$ into our BCD framework, given a fixed leader's action $\boldsymbol{\lambda}^*$.

Finally, in this paper we mainly focused on the stability and convergence analysis of the underlying dynamics. Therefore, an interesting direction here is to characterize the structure of the equilibrium points in more detail. 

\bibliographystyle{IEEEtran}
\bibliography{thesisrefs}

\newpage
\section{Appendix I}

\smallskip
\begin{lemma}\label{lemm:eulerian-graph}
Consider the restricted HK model and choose $\delta<\frac{1}{n^2}$. Let $t_{\delta}$ be a time instance for which $\sum_{\tau=t_{\delta}}^{\infty}\sum_{i=1}^{n}(x_i(\tau+1)-x_i(\tau))^2<\delta^2$. Then for any $t\ge t_{\delta}$ the communication network $\boldsymbol{\lambda}_t$ remains unchanged.  
\end{lemma}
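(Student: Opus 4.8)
The plan is to argue by contradiction: I assume the communication network changes at some time $t\ge t_\delta$ and show that this forces a single step displacement so large that it violates the movement budget $\sum_{\tau\ge t_\delta}\|\boldsymbol{x}(\tau+1)-\boldsymbol{x}(\tau)\|^2<\delta^2$ inherited from \eqref{eq:hk-drift-restricted}. First I would record the elementary consequences of the hypothesis. Since each summand is nonnegative, every single term obeys $\|\boldsymbol{x}(\tau+1)-\boldsymbol{x}(\tau)\|^2<\delta^2$ for $\tau\ge t_\delta$, so $|x_i(\tau+1)-x_i(\tau)|<\delta<\tfrac{1}{n^2}$ for every agent $i$ and every $\tau\ge t_\delta$. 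I would then invoke the order preservation property of averaging updates (if $x_i(\tau)\le x_j(\tau)$ then $x_i(\tau+1)\le x_j(\tau+1)$), so that after relabeling we may assume $x_1(\tau)\le\cdots\le x_n(\tau)$ for all $\tau$. A change of $\boldsymbol{\lambda}_t$ then means that for some pair $\{p,q\}\in\mathcal{E}$ the distance $|x_p-x_q|$ crosses the threshold $1$ between $t$ and $t+1$; by the triangle inequality this distance changes by at most $|x_p(\tau+1)-x_p(\tau)|+|x_q(\tau+1)-x_q(\tau)|<2\delta$, so the crossing edge must already satisfy $\bigl||x_p(t)-x_q(t)|-1\bigr|<2\delta$, i.e.\ it sits in a $2\delta$-window around the threshold.

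Next I would dispose of the easy direction. Working with the sorted configuration, any consecutive gap $g_k(t)=x_{k+1}(t)-x_k(t)$ that already exceeds $1$ can only grow: the blocks $\{1,\dots,k\}$ and $\{k+1,\dots,n\}$ share no active edge, each updated value stays inside the convex hull of its own block, and hence $x_k(\tau+1)\le x_k(\tau)$ while $x_{k+1}(\tau+1)\ge x_{k+1}(\tau)$. Consequently separated components never merge, which rules out the appearance of any edge across a gap already larger than $1$. The remaining possibilities are an edge \emph{disappearing} (a component splitting) and an edge \emph{appearing} through the gradual contraction of a still-connected component.

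The heart of the proof, and the step I expect to be hardest, is to show that each such genuine switch forces a macroscopic displacement $\|\boldsymbol{x}(t+1)-\boldsymbol{x}(t)\|\ge\tfrac{1}{n^2}$, which together with $\delta<\tfrac{1}{n^2}$ contradicts the per-step bound above (and in fact permits at most one switch after $t_\delta$, hence none). A purely local estimate fails here, since a pair sitting at distance $1+\eta$ with $\eta$ arbitrarily small can in principle cross the threshold while every agent moves only slightly; the resolution must be a global, Eulerian-type accounting over the affected component, which is precisely what the lemma's name \emph{eulerian-graph} signals. Concretely, I would take the path (or cycle) in the active subgraph joining the endpoints of the crossing edge, sum the \emph{signed} increments $x_{v_{m+1}}-x_{v_m}$ along it, and exploit the telescoping identity $\sum_m\bigl(x_{v_{m+1}}-x_{v_m}\bigr)=x_{\text{last}}-x_{\text{first}}$ together with the fact that each vertex is incident to at most $n-1$ such links. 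Because the edges whose status changes form a balanced even-degree multigraph, a single edge cannot cross the threshold in isolation: the order-$1$ distance change it carries must be matched elsewhere, and distributing a crossing of order $1$ over at most $\binom{n}{2}$ edges and $n$ agents yields a per-agent displacement bounded below by a quantity of order $1/n^2$. Making this aggregation precise, that is identifying the correct Eulerian multigraph, fixing the signs, and extracting the clean constant $\tfrac{1}{n^2}$, is the main obstacle; once the displacement lower bound is in hand, the contradiction with summability is immediate and $\boldsymbol{\lambda}_t$ must remain frozen for all $t\ge t_\delta$.
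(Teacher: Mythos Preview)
Your contradiction strategy and your instinct that the argument hinges on a balance structure among the edges that flip status are both right. Two steps, however, are genuinely problematic.

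First, order preservation fails for the \emph{restricted} HK model. Because neighbourhoods are filtered by the graph $\mathcal{G}$, two consecutive agents $x_i<x_j$ with $\{i,j\}\notin\mathcal{E}$ may average over disjoint sets lying on opposite sides and swap. Your ``easy direction'' (separated blocks never merge because consecutive gaps larger than $1$ can only grow) rests on this and does not go through as written; the paper's proof never invokes order preservation of the dynamics.

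Second, and this is the real gap, the mechanism you aim for is inverted. You try to show that a switch at time $t$ forces $\|\boldsymbol{x}(t+1)-\boldsymbol{x}(t)\|\ge 1/n^{2}$, and you correctly note that no local estimate delivers this; but no global telescoping over a path in the \emph{active} graph will either, since the quantities you propose to sum are positions (each of size $\sim 1$), not displacements, and there is nothing of order $1$ being ``distributed'' among the agents at that step. The paper instead looks one step further. It builds a switch graph $\mathcal{B}$ on the positions at $t+1$, with \emph{solid} edges for pairs that became neighbours and \emph{dashed} edges for pairs that separated, and records that every such edge has length in $(1-2\delta,\,1+2\delta)$, dashed strictly longer than solid. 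The key observation is that if some node $i$ were \emph{unbalanced}, i.e.\ $s_L(i)-d_L(i)\neq s_R(i)-d_R(i)$, then its neighbour set at time $t+1$ differs from that at time $t$ by a net unit-distance pull to one side, forcing $|x_i(t+2)-x_i(t+1)|$ to be of order $1/n>\delta$, which already contradicts the hypothesis. So balance at every node is a \emph{consequence} of the small-movement assumption, not a premise. The combinatorial core is then to show that a nonempty $\mathcal{B}$ cannot be everywhere balanced: starting from the leftmost vertex one follows a maximal alternating walk (right along solid edges, left along dashed), argues it cannot close into a cycle because any alternating cycle would have equal numbers of solid and dashed edges while dashed edges are strictly longer, and concludes that the terminal vertex of the walk is unbalanced. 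Your sketch asserts balance and tries to extract a displacement bound from it; the working argument runs in the opposite direction.
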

\begin{proof}
By contradiction, let us assume that the lemma does not hold and $t\ge t_{\delta}$ be a time instance for which the communication network changes, i.e., $\boldsymbol{\lambda}_t\neq \boldsymbol{\lambda}_{t+1}$. We construct an undirected graph $\mathcal{B}$, so-called \emph{balanced graph}, as follows: 
\begin{itemize}
\item Nodes of $\mathcal{B}$ are the points $\{x_1(t+1),x_2(t+1),\ldots,x_n(t+1)\}$ which are positioned on the real axis. For simplicity, we refer to these nodes by their indicies such that $i$ represents the node positioned at $x_i(t+1)$. Note that in $\mathcal{B}$ both labels and geometric positions of the vertices are important. 
\item The edges of $\mathcal{B}$ are partitioned into two groups: \emph{solid} and \emph{dashed}. There is a \emph{solid} edge between two nodes $i$ and $j$ if and only if $\{i,j\}\in \mathcal{E}, |x_i(t)-x_j(t)|>1$, and $|x_i(t+1)-x_j(t+1)|\leq 1$. Thus a solid edge between nodes $i$ and $j$ in $\mathcal{B}$ indicates that agents $i$ and $j$ were not each others' neighbors at time $t$ (i.e., $(\boldsymbol{\lambda}_t)_{ij}=0$), while they become neighbor at time $t+1$ (i.e., $(\boldsymbol{\lambda}_{t+1})_{ij}=1$). There is a \emph{dashed} edge between $i$ and $j$ if and only if $\{i,j\}\in \mathcal{E}, |x_i(t)-x_j(t)|\leq 1$, and $|x_i(t+1)-x_j(t+1)|> 1$. Thus a dashed edge in $\mathcal{B}$ shows that agents $i$ and $j$ were each others' neighbors at time $t$, but they separate at time $t+1$.    
\end{itemize} 

%In other words, node $i$ is incident to an edge in the balance graph $\mathcal{B}$ if it witnesses a change in its neighbors from time $t$ to time $t+1$ in the communication graph of the restricted-HK model. In particular, . .  

Next we show that if $t\ge t_{\delta}$ is a time instance for which the communication network changes, then the following three facts must hold: {\bf Fact 1)} For every two nodes $i$ and $j$ which are connected by an edge in $\mathcal{B}$, we have $1-2\delta<|x_i(t+1)-x_j(t+1)|<1+2\delta$. This is an immediate consequence of the edge definition given above together with the fact that no agent can move by a distance more than $\delta$ from time $t$ to $t+1$.  {\bf Fact 2)} The geometric distance between \emph{any} two nodes which are connected by a dashed edge is strictly greater than the distance between \emph{any} other two nodes which are connected by a solid edge. This is because the former is strictly greater than 1, while the later is at most 1. {\bf Fact 3)} Let $s_L(i)$ and $d_L(i)$ denote, respectively, the number of solid and dashed edges which are incident to node $i$ from left hand side. Similarly, let $s_R(i)$ and $d_R(i)$ be the number of solid/dashed edges which are incident to node $i$ from right hand side. Then a straightforward calculation as in \cite[Equation 2]{chazelle2017inertial} shows that for sufficiently small $\delta\leq \frac{1}{n^2}$, we must have $s_L(i)-d_L(i)=s_R(i)-d_R(i), \forall i$. In other words, every node in $\mathcal{B}$ must be \emph{balanced} with respect to the `effective' amount of change in its neighbors from time $t$ to $t+1$. Otherwise moving from time $t+1$ to $t+2$, an unbalanced agent will move by a distance of at least $\delta$, contradicting the fact that $t\ge t_{\delta}$.\footnote{Intuitively, $s_L(i)-d_L(i)$ measures the `effective' change due to change of left-neighbors of agent $i$ from time $t$ to $t+1$. This must be equal to $s_R(i)-d_R(i)$ which is the effective change in the right-neighbors of $i$. Otherwise, agent $i$ will move by a large distance at the next time instance $t+2$.}

To derive a contradiction, consider the time $t\ge t_{\delta}$ for which a switch in the communication network occurs. This means that the associated balance graph $\mathcal{B}$ has at least one edge. Henceforth, and by some abuse of notation, we denote a nontrivial connected component of this balanced graph by  $\mathcal{B}$ (which is a connected graph with at least one edge and no isolated vertex). Since Fact 1 holds, the vertices of $\mathcal{B}$ can be covered by the union of some \emph{disjoint} intervals $\{I_{\ell}=(a_\ell,b_{\ell}): \ell=1,\ldots,k\}$, where $a_{1}<b_1<a_2<b_2,\ldots<b_{k}$. Moreover, each of these intervals has a very small length (at most $b_{\ell}-a_{\ell}<4n\delta$), and any edge of $\mathcal{B}$ has its end points in two consecutive intervals $I_{\ell},I_{\ell+1}$ (see Figure \ref{Fig:ilands} for an illustration).\footnote{Note that this also implies that every two consecutive intervals are apart from each other by a large distance of at least $1-2\delta-8n\delta$.}

\begin{figure}
  \begin{center}
    \includegraphics[totalheight=.27\textheight,
width=.4\textwidth,viewport=350 0 950 600]{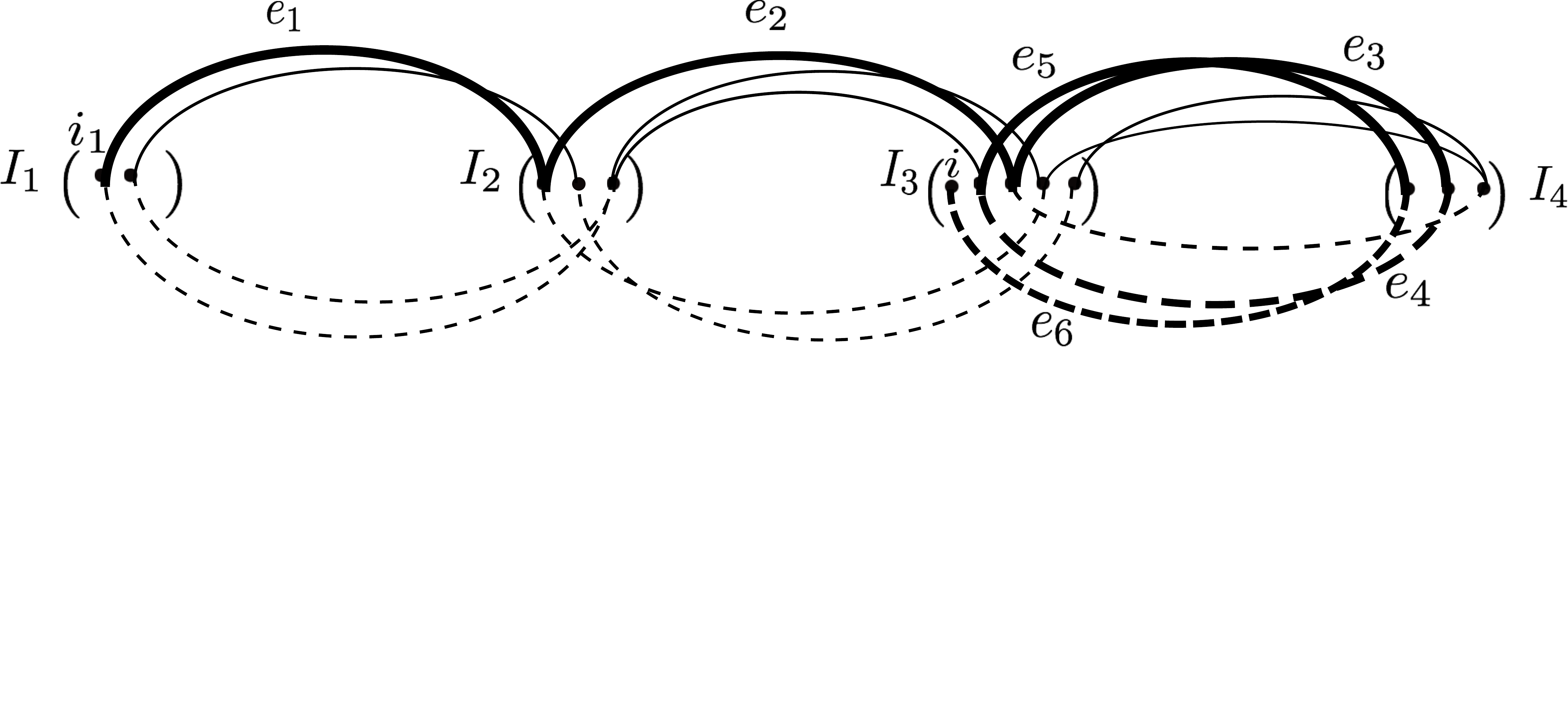}
  \end{center}\vspace{-3cm}
  \caption{\footnotesize{An illustration of the balanced graph $\mathcal{B}$ with solid/dashed edges and covering intervals $I_1,I_2,I_3$, and $I_4$. Here a maximal walk $W=(e_1,e_2,e_3,e_4,e_5,e_6)$ with initial node $i_1$ and endpoint $i$ is represented by thick edges. Note that for node $i_1$ we have $s_L(i_1)=d_L(i_1)=0$ and $s_R(i_1)=d_R(i_1)=1$. Therefore, $s_L(i_1)-d_L(i_1)=s_R(i_1)-d_R(i_1)$, which means that node $i_1$ is balanced. However for node $i$ we have $s_L(i)=d_L(i)=s_R(i)=0$, and $d_R(i)=1$. Thus $s_L(i)-d_L(i)=0\neq -1=s_R(i)-d_R(i)$, which means that node $i$ is not balanced. \vspace{-0.2cm}}}\label{Fig:ilands}
\end{figure} 

Next let us consider the most left agent in the connected balance graph $\mathcal{B}$ and call it $i_1$ (breaking ties arbitrarily). By Fact 3 every node in $\mathcal{B}$ (and in particular $i_1$) is balanced. Since $i_1$ does not have any edge incident to it from left hand side, it must have at least one solid edge incident to it from the right hand side. Now starting from node $i_1$, let us consider a maximal \emph{alternating walk} $W$ which sequentially traverses over the edges of $\mathcal{B}$ from left to right using \emph{solid} edges, and from right to left using \emph{dashed} edges (Figure \ref{Fig:ilands}). As we argued above, $W$ is nonempty and contains at least one solid edge. We claim that $W$ must be a \emph{path}, meaning that no vertex can be visited more than once by $W$. Otherwise, consider the first time when $W$ visits a vertex for the second time, which results to an alternating cycle $C$. As each vertex of $C$ lies in one of the disjoint intervals $I_{\ell}$, and each edge of $C$ has its endpoints in two consecutive intervals, this implies that $C$ must have the same number of solid and dashed edges. This in view of Fact 2 shows that the sum of lengths of solid edges in $C$ is strictly less than the sum of lengths of dashed edges in $C$. However, we know that for any alternating cycle the sum of lengths of solid edges (the total movement from left to right) must be equal to the sum of lengths of dashed edges (the total movement from right to left). This contradiction shows that $W$ must be a path. 

Finally, let us denote the other endpoint of path $W$ by $i$. Either node $i$ is reached through the path $W$ using a dashed edge (and hence from right to left), in which case $d_R(i)\ge 1$ and $s_R(i)=d_L(i)=0$ (otherwise $W$ can be extended and is not maximal). Or node $i$ is reached using a solid edge (and hence from left to right), in which case $s_L(i)\ge 1$ and $s_R(i)=d_L(i)=0$. Now an easy calculation shows that in either case $s_L(i)-d_L(i)\neq s_R(i)-d_R(i)$, meaning that node $i$ is not a balanced node. This is in contradiction with Fact 3, which completes the proof.   
\end{proof}

\end{document}